\def\u{{\bf u}}
\newcommand{\bsm}{\mbox{\bf m}}
\newcommand{\bx}{\mbox{\bf x}}
\newcommand{\bM}{\mbox{\bf M}}
\newcommand{\bA}{\mbox{\bf A}}
\newcommand{\bB}{\mbox{\bf B}}
\newcommand{\bb}{\mbox{\bf b}}
\newcommand{\bD}{\mbox{\bf D}}
\newcommand{\bd}{\mbox{\bf d}}
\newcommand{\bH}{\mbox{\bf H}}
\newcommand{\bT}{\mbox{\bf T}}
\newcommand{\bv}{\mbox{\bf v}}
\newcommand{\bV}{\mbox{\bf V}}
\newcommand{\bpi}{\mbox{\boldmath $\pi$}}
\newcommand{\balpha}{\mbox{\boldmath $\alpha$}}
\newcommand{\bdelta}{\mbox{\boldmath $\delta$}}
\newcommand{\bDelta}{\mbox{\boldmath $\Delta$}}
\newcommand{\bbeta}{\mbox{\boldmath $\beta$}}
\newcommand{\btheta}{\mbox{\boldmath $\theta$}}
\newcommand{\boldeta}{\mbox{\boldmath $\eta$}}
\newcommand{\bTheta}{\mbox{\boldmath $\Theta$}}
\def\bu{{\boldsymbol {\u}}}
\newcommand{\be}{\mbox{\bf e}}
\newcommand{\bz}{\mbox{\bf z}}
\newcommand{\bp}{\mbox{\bf p}}
\newcommand{\bZ}{\mbox{\bf Z}}
\def\0{{\bf 0}}
\def\1{{\bf 1}}
\newcommand{\bI}{\mbox{\bf I}}
\newcommand{\bJ}{\mbox{\bf J}}
\newcommand{\simp}{\mathbb{S}}
\newcommand{\real}{\mathbb{R}}
\newcommand{\E}{\mathbb{E}}
\newcommand{\Var}{\mathrm{Var}}
\def\IN{\mathbb{N}}
\def\trans{^{\rm T}}
\newtheorem{theorem}{Theorem}[section]
\newtheorem{corollary}[theorem]{Corollary}
\newtheorem{assumption}[theorem]{Assumption}
\newcommand{\rev}[2][]{#2}
\newcommand{\revnew}[2][]{#2}
\newcommand{\blind}{0}
\begin{document}

\title{\bf Dissecting Microbial Community Structure and Heterogeneity via Multivariate Covariate-Adjusted Clustering}

\if1\blind
{
  \author{}\date{}
  \maketitle
} \fi

\if0\blind
{
  \author{
    Zhongmao Liu$^{1\dagger}$,
    Xiaohui Yin$^{1\dagger}$,
    Yanjiao Zhou$^2$,
    Gen Li$^3$,
    Kun Chen$^{1,2}$\thanks{Corresponding author: kun.chen@uconn.edu}\\
    $^1$\textit{Department of Statistics, University of Connecticut}\\
    $^2$\textit{University of Connecticut Health Center}\\
    $^3$\textit{Department of Biostatistics, University of Michigan}
  }
  \maketitle
}
\fi

\begin{abstract}

\singlespacing
In microbiome studies, it is often of great interest to identify clusters or partitions of microbiome profiles within a study population and to characterize the distinctive attributes of each resulting microbial community. While raw counts or relative compositions are commonly used for such analysis, variations between clusters may be driven or distorted by subject-level covariates, reflecting underlying biological and clinical heterogeneity across individuals. Simultaneously detecting latent communities and identifying covariates that differentiate them can enhance our understanding of the microbiome and its association with health outcomes. To this end, we propose a Dirichlet-multinomial mixture regression (DMMR) model that enables joint clustering of microbiome profiles while accounting for covariates with either homogeneous or heterogeneous effects across clusters. A novel symmetric link function is introduced to facilitate covariate modeling through the compositional parameters. We develop efficient algorithms with convergence guarantees for parameter estimation and establish theoretical properties of the proposed estimators. Extensive simulation studies demonstrate the effectiveness of the method in clustering, feature selection, and heterogeneity detection. We illustrate the utility of DMMR through a comprehensive application to upper-airway microbiota data from a pediatric asthma study, uncovering distinct microbial subtypes and their associations with clinical characteristics.

\noindent Keywords: Count data; Clustering; Heterogeneity; Mixture models.
\end{abstract}

\doublespacing

\section{Introduction}
\label{s:intro}

In microbiome studies, it is often of great interest to identify latent clusters or partitions of the study population based on the observed microbiome data and to characterize the unique microbial profile of each resulting cluster/community \citep{li2015microbiome, zhou2019longitudinal}.
For example, \citet{nakatsu2015gut} cataloged the microbial communities collected from human gut mucosae samples at different stages of colorectal tumorigenesis and concluded that gut metacommunities are associated with the development of colorectal cancer.
\citet{wu2011linking} showed that gut microbiota enterotypes exhibited a strong association with long-term diets, and the correlation mainly existed in protein, animal fat, and carbohydrates.
\citet{zhong2019impact} identified three enterotypes characterized by dominance of different genera in the gut microbiota and further revealed that the correlations between pre-school dietary lifestyle and metabolic phenotypes exhibited a dependency on enterotypes. These studies clearly demonstrate the importance of microbiome clustering, which can facilitate the development of accurate disease diagnostics, targeted interventions, and personalized medicines for improving human health.

Existing methods for microbiome cluster analysis can be categorized into two groups: distance-based methods and model-based methods. Distance-based methods assign samples to different clusters based on pairwise distances or dissimilarity measures between samples, and
many metrics have been proposed to accommodate special features of microbiome data. However, the ``best'' performer is usually context-dependent and the results could be unstable.
With new samples, one usually has to rerun the analysis, but the results could change dramatically. Also, the cluster results typically do not directly provide any insights into ``why'' certain samples form a cluster.

Model-based clustering methods employ a probabilistic model for observed microbiome data; in particular, finite mixture models are quite popular. For instance, \citet{holmes2012dirichlet} proposed Dirichlet-multinomial mixtures (DMM), which model read counts by a multinomial distribution and impose a mixture of Dirichlet distributions as prior for the multinomial parameters.
\citet{mao2022dirichlet} further generalized DMM by incorporating the phylogenetic tree information.
\citet{fang2023clustering} introduced a logistic-normal multinomial mixture model, which substitutes the Dirichlet prior in DMM by a mixture Gaussian prior for the additive log-ratio transformed multinomial parameters.
These model-based approaches can explicitly characterize the ``average'' profile for each cluster and quantify the uncertainty of each sample belonging to each cluster.
New observations can be readily assigned to fitted clusters on the basis of the estimated probabilities.

However, a key limitation of existing clustering methods is that they rarely take into account covariates. In microbiome studies, auxiliary covariates such as demographic and clinical variables are often available and can be associated with the microbial profile. These covariates can either confound clustering results or serve as underlying drivers of microbial heterogeneity. For instance, variables like sex and age are known to influence the composition of human microbiome. Conducting clustering without adjusting for such variables may lead to artificial clusters that reflect these covariates rather than capturing intrinsic, biologically or clinically relevant enterotypes. Moreover, some covariates may exert heterogeneous effects across clusters, helping shape their formation. For example, dietary influences on the microbiome may differ by enterotype, contributing to distinct microbial community structures.

These limitations are particularly relevant in our motivating study on childhood asthma \citep{jackson2018quintupling, zhou2019upper}, where we aim to identify airway microbiome subtypes associated with future risk of exacerbations. In this setting, demographic and clinical factors, such as age, sex, medication use, and baseline symptom severity, may influence microbial patterns or interact with them in predicting disease progression. These considerations underscore the need for clustering methods that not only adjust for covariate effects but also capture potential heterogeneity in their influence across latent microbial communities.

Therefore, in practice, it is important to adjust for common covariate effects as well as to identify heterogeneous effects that contribute to the formation of the clusters, a task referred to as \textit{heterogeneity pursuit} \citep{ZhaoShi2015,li2022pursuing}.

In datasets without clustering structure, a variety of regression methods have been developed to associate microbiome abundances with environmental or biological covariates. Many methods treat each individual taxon as a univariate response and exploit a beta regression \citep{pan2021statistical} or negative binomial regression \citep{zhang2016zero}, which ignore the multivariate and compositional natures of microbiome data.
Multivariate models such as Dirichlet-multinomial regression \citep{chen2013variable,wang2017dirichlet,tang2019zero} and logistic-normal-multinomial regression \citep{xia2013logistic,li2018conditional} are proposed to jointly associate the microbial profile with covariates. However, none of these methods has been adapted for cluster analysis or heterogeneity pursuit.

In this work, we develop a Dirichlet-multinomial mixture regression (DMMR) model with a novel symmetric link function for simultaneously dissecting microbial community structure and heterogeneity induced by individual-level covariates with microbiome count data. Specifically, we model observed microbial read counts using a multinomial distribution, where the probability parameters follow a mixture of Dirichlet distributions. The Dirichlet concentration parameters are further linked to covariates through a regression structure. A key innovation of our approach is the centered log-ratio (clr) link function, which decouples the dispersion parameter from the compositional concentration parameters. This link function operates directly on relative abundances, providing biologically meaningful interpretations and making it well-suited for microbiome data analysis. We further utilize constrained regularization to achieve feature selection (i.e., identifying covariates that are associated with the microbiome) and heterogeneity pursuit (i.e., identifying covariates that have distinct effects in different clusters). Our proposed method achieves estimation consistency as well as variable selection consistency under certain conditions. Overall, the DMMR framework offers a unified statistical model for microbiome data, accounts for uncertainty in clustering, and captures distinct covariate effects on microbial profiles across clusters.

The rest of the paper is organized as follows. In Section~\ref{s:model}, we first introduce the Dirichlet regression and the proposed clr link, and then elaborate the proposed DMMR framework. In Section~\ref{s:computation}, we devise an efficient algorithm for constrained and regularized estimation. We examine the theoretical properties of DMMR in Section~\ref{s:theory}. Comprehensive simulation studies and a real data analysis of the upper-airway microbiota and asthma study are presented in Sections~\ref{s:simulation} and~\ref{s:application}, respectively. Conclusion and future directions are discussed in Section~\ref{s:discuss}.

\section{\texorpdfstring{Dirichlet-Multinomial Mixture Regression\\Framework}{Dirichlet-Multinomial Mixture Regression Framework}}
\label{s:model}

In this section, we first give an overview of the DMM model. We then introduce a new link function for regression modeling with microbiome data based on the centered log-ratio transformation, which serves as a building block for the proposed DMMR framework. Further, we elaborate on the mixture model setup for clustering and introduce regularization with reparameterization for enabling heterogeneity pursuit.

\subsection{Overview of Dirichlet-Multinomial Mixture Model}

Let $\bsm=(m_1,\ldots,m_p)\trans \in \IN^{p}$ be a vector of taxon counts and $M=\sum_{j=1}^{p} m_j$ be the total count. Let $S \in \{1, \dots, K\}$ be the unobservable hidden state variable indicating the cluster membership. Assume $\bsm$, in any given state $S$, follow a Dirichlet-multinomial (DM) distribution. The DM distribution can arise from a compound generation mechanism naturally suited for modeling taxon counts from sequencing reads: a probability vector is generated from a Dirichlet distribution, and a vector of counts is then drawn from a multinomial distribution with the probability vector and the total count. Specifically, it is assumed that
$\bsm \mid (S=k)\sim \mbox{DM}(M,\theta_k,\balpha^{[k]})$, for $k = 1, \ldots, K$,
or, equivalently, the hierarchical structure:
\begin{equation}\label{eq:DMhierarchical}
  \begin{aligned}
    \bsm \mid \bp \sim & \mbox{Multinomial}(M,\bp);\qquad
    \bp \mid (S=k) \sim & \mbox{Dir}(\theta_k,\balpha^{[k]}),
  \end{aligned}
\end{equation}
where $\balpha^{[k]} = (\alpha_{1}^{[k]}, \ldots, \alpha_{p}^{[k]})\trans \in \simp^{p-1}$ is a probability vector\rev{, with $\simp^{p-1}=\{\bp\in\real^p: \sum_{j=1}^p p_j=1, p_j>0, j=1,\ldots,p\}$ being the $p$-dimensional simplex}, $\theta_{k}>0$ is an over-dispersion parameter, and $\bp$ follows a Dirichlet distribution under each state $k$ in the hierarchical formulation. The conditional mean of taxon $j$'s count is $\E(m_{j} \mid S = k) = M\alpha_{j}^{[k]}$. In other words, $\alpha_{j}^{[k]}$ represents the expected relative abundance of the $j$th taxon in the $k$th state. The conditional variance is $\Var(m_j \mid S=k) = M \alpha_{j}^{[k]} (1-\alpha_{j}^{[k]})(M \theta_k + 1)/(\theta_k + 1)$.

The DM model can be extended to the DMM model. Let $\bpi = (\pi_1, \dots, \pi_K)\trans$ be the mixing probability vector of the $K$ clusters and $\btheta = (\theta_1, \dots, \theta_K)\trans$ be a collection of the over-dispersion parameters. Under DMM, it is assumed that $\bsm$ follows a mixture DM distribution with density $\sum_{k=1}^{K}\pi_{k} f_k(\bsm \mid M, \theta_{k}, \balpha^{[k]})$,
where $f_k$ is the density function for $\mbox{DM}(M, \theta_{k}, \balpha^{[k]})$.

\subsection{The ``clr'' Link with Dirichlet Regression}\label{sec:2:dirichlet_reg}

Suppose besides the count or compositional outcomes, a vector of covariates, $\bx\in\real^q$, e.g., demographics, diagnoses, other multi-omics, etc., is also collected. Following the generalized linear model setup, we now consider linking the microbial outcomes to the covariates.

Consider a compositional data vector $\bp\in\simp^{p-1}$.
Assume that $\bp$ follows a Dirichlet distribution $\mbox{Dir}(\theta,\balpha)$, where $\theta>0$ is the over-dispersion parameter and $\balpha\in\simp^{p-1}$ contains the concentration parameters. Some common link functions include the logit link $\log(\alpha_j/\alpha_r)=\beta_{0j}+\bx\trans\bbeta_j$ \citep{yee2010vgam} where $r \in \{1,2,\dots,p\}$ indicates the reference level, and the log-linear link $\log(\alpha_j/\theta)=\beta_{0j}+\bx\trans\bbeta_j$ \citep{chen2013variable, neish2015cluster, wadsworth2017integrative}. Although convenient, these links have significant limitations. The logit link requires a reference, which is usually selected arbitrarily and the resultant parameter estimation lacks symmetry; the log-linear link entangles the over-dispersion parameter and the mean parameters, preventing a direct assessment of the covariate effects on the expected compositions.

We propose an intuitive multivariate link function based on the centered log-ratio transformation that directly links the concentration parameter ${\balpha}$ and the linear predictors.
Specifically, for a compositional vector ${\balpha}\in\simp^{p-1}$, the clr transformation is defined as
\begin{equation*}
\mbox{clr}({\balpha})=\Big(\log \frac{{\alpha}_1}{g({\balpha})}, \ldots, \log \frac{{\alpha}_p}{g({\balpha})}  \Big)\trans,
\end{equation*}
where $g({\balpha})=\Big(\prod_{j=1}^{p}{\alpha}_j\Big)^{1/p}$ is the geometric mean of ${\balpha}$. The transformed vector is in the $p$-dimensional hyperplane subject to $\1_p\trans \mbox{clr}({\balpha}) = 0$, \rev{ where $\1_p$ denotes a vector of ones with length $p$}, which maintains the symmetry and facilitates subsequent modeling.
Based on the clr link, we may consider the following linearly constrained Dirichlet regression formulation,
\begin{equation*}
\bp\sim \ \mbox{Dir}(\theta,{\balpha}),\quad \mbox{clr}({\balpha}) = \bbeta_{0} + \bB\trans\bx, \qquad \mbox{ s.t. } \bbeta_{0}\trans\1_p=0, \bB\1_p=\0,
\end{equation*}
where $\bbeta_0\in\real^p$ is an intercept vector and $\bB=(\bbeta_1,\ldots,\bbeta_p)$ is a $q\times p$ coefficient matrix.

The merit of the proposed clr link function can also be seen from its explicit inverse function, i.e., it implies that the $\alpha_j$s are parameterized by the softmax function,
$$
{\alpha}_j=\frac{\exp{(\beta_{0j}+{\bx}\trans\bbeta_j)}} {\sum_{j'=1}^{p}\exp{(\beta_{0j'}+{\bx}\trans\bbeta_{j'})}}, \ j=1,\ldots,p.
$$
As such, this inverse function is symmetric for all the compositional components, and explicitly characterizes the covariate effects on the expected compositions.

\subsection{Dirichlet-Multinomial Mixture Regression with the ``clr'' Link}\label{sec:2:dmmr}

We propose linearly constrained DMMR model with the clr link. Suppose there are $K$ clusters with mixing probability $\bpi = (\pi_1, \dots, \pi_K)\trans$. Let $\bsm$ be the observed count vector. The DMMR model can then be expressed as
\begin{equation}\label{eq:DMMR}
\begin{aligned}
\Pr(S=k) &= \pi_k; \qquad \bsm \mid (S=k) \sim  \mbox{DM}(M,\theta_k,\balpha^{[k]});\\
\mbox{clr}(\balpha^{[k]}) &= \bbeta_{0}^{[k]}+{\bB^{[k]}}\trans\bx,\quad
 \text{  s.t. }{\bbeta_0^{[k]}}\trans\1_p=0\mbox{ and }\bB^{[k]}\1_p=\0.
\end{aligned}
\end{equation}

The DMMR model is a general framework that integrates the mixture model with the Dirichlet-multinomial regression. The mixture component allows for model-based cluster analysis, and the regression component adjusts for covariate effects.
There are three sets of model parameters: $\{\pi_k\}_{k=1}^K$, $\{\bbeta_{0}^{[k]},\bB^{[k]}\}_{k=1}^K$, and $\{\theta_k\}_{k=1}^K$, where $\{\pi_k\}$ indicates the mixture proportions, $\{\bbeta_{0}^{[k]},\bB^{[k]} = (\bbeta_1^{[k]},\ldots,\bbeta_p^{[k]})\}$ captures the intercepts and covariate effects, and $\{\theta_k\}$ characterizes potential over-dispersion in cluster $k$, for $k=1,\ldots, K$.

Let $\bTheta=\bigl(\{\pi_k\}_{k=1}^K, \{\bbeta_{0}^{[k]}, \bB^{[k]}\}_{k=1}^K, \{\theta_k\}_{k=1}^K\bigr)$ be the collection of all the model parameters. We then have that
\begin{equation}
  f(\bsm; M, \bTheta)= \sum_{k=1}^{K}\pi_k   f_{\mathrm{DM}}(\bsm; M, \bx,\theta_k, \bbeta_{0}^{[k]}, \bB^{[k]}), \label{eq:dmmrdensity}
\end{equation}
where
$$
\alpha_{j}^{[k]}=\frac{\exp{(\beta_{0j}^{[k]} + \bx\trans\bbeta^{[k]}_j)}}{\sum_{j'=1}^{p}\exp{(\beta_{0j'}^{[k]} + \bx\trans\bbeta^{[k]}_{j'})}}, \qquad j = 1, \ldots, p; k = 1,\dots, K.
$$
The Dirichlet-multinomial density $f_{\mathrm{DM}}$ is derived from the hierarchical formulation in \eqref{eq:DMhierarchical}; see Supplement~\ref{supp:1:DM}. \rev{Here the count parameter $M$ is allowed to differ for each observation $\bsm$ and considered known; for simplicity, we may omit $M$ and write the density as $f(\bsm; \bTheta)$.}

The framework subsumes several methods as its special cases. When setting $\bB^{[k]}\equiv\0$ for all $k\in\{1,\ldots, K\}$, the DMMR model reduces to the DMM model without covariate adjustment \citep{holmes2012dirichlet}. Furthermore, when the mixture structure is ignored, the model reduces to a Dirichlet-multinomial regression model incorporating the new clr link function.

\subsection{Feature Selection and Heterogeneity Pursuit}\label{sec:2:heterogeneity}
In microbiome analysis, an important task is to select covariates that exhibit strong associations with the observed data, known as \textit{feature selection}. More intriguingly, in cluster analysis, another related task is to identify covariates that drive heterogeneity among clusters, i.e., \textit{heterogeneity pursuit} \citep{li2022pursuing}.

Let $\bbeta^{[k]}_{(l)}\in\real^{p}$ be the $l$th row of $\bB^{[k]}$, corresponding to the $l$th covariate $x_l$ in the $k$th cluster. By design, the coefficient vector is subject to the linear constraints $\1\trans\bbeta_{(l)}^{[k]}=0, l = 1, \dots, q$.
It is of great interest to distinguish the following three types of covariate effects:
\begin{itemize}
\item[(a)] {\em No effect:} $\bbeta^{[k]}_{(l)}=\0$, for any $k\in\{1,\ldots,K\}$;
\item[(b)] {\em Homogeneous covariate effect:} $\bbeta_{(l)}^{[k]}=\bbeta_{(l)}^{[k^\prime]}\neq\0$, for any pairs of $k,k^\prime\in\{1,\ldots,K\}, \ k\neq k^\prime$;
\item[(c)] {\em Heterogeneous covariate effect:} $\bbeta_{(l)}^{[k]}\neq\bbeta_{(l)}^{[k^\prime]}$, for some $k\neq k^\prime,\ k,k^\prime\in\{1,\ldots,K\}$.
\end{itemize}
Here, (a) indicates that the $l$th covariate does not affect the compositional profile in any cluster and thus is irrelevant; (b) implies that the $l$th covariate plays a role in determining the compositional profile, but its effect is the same across the clusters; (c) shows that the $l$th covariate not only affects the compositional profile, but also its differential effects drive the heterogeneity among different clusters.

\begin{figure}
    \centering
    \includegraphics[width=\linewidth]{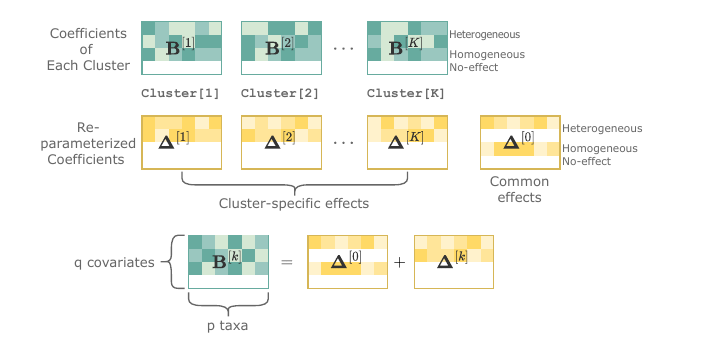}
    \caption{Diagram of coefficient reparameterization in DMMR for heterogeneity pursuit.}
    \label{fig:DMMR_anova}
\end{figure}

Motivated by \citet{li2022pursuing}, we design a regularization scheme for $\{\bB^{[k]}\}_{k=1}^K$ that permits feature selection and heterogeneity pursuit simultaneously.
To achieve this, we first introduce a reparameterization of the original regression coefficients as shown in Figure~\ref{fig:DMMR_anova}. For $k = 1, \ldots, K$ and $l = 1, \ldots, q$, we rewrite  ${\bbeta}^{[k]}_{(l)} \in \real^{p}$ as
\begin{equation}\label{eq:reparam}
\begin{aligned}
{\bbeta}^{[k]}_{(l)}
&= {\bdelta}^{[0]}_{(l)} + {\bdelta}^{[k]}_{(l)},
\quad \text{s.t. }\1\trans{\bdelta}^{[0]}_{(l)}=\1\trans{\bdelta}^{[k]}_{(l)}=0 \text{ and }\sum_{k=1}^{K}{\bdelta}^{[k]}_{(l)}=\0,
\end{aligned}
\end{equation}
where ${\bdelta}^{[0]}_{(l)}$ is the common effect of $x_l$ on the microbiome compositional profile, and ${\bdelta}^{[k]}_{(l)}$ is the cluster-specific effect of $x_l$ in the $k$th cluster.
In particular, $\1\trans{\bdelta}^{[0]}_{(l)}=\1\trans{\bdelta}^{[k]}_{(l)}=0$ ensures that the linear constraints on $\bbeta^{[k]}_{(l)}$ hold, while $\sum_{k=1}^{K}{\bdelta}^{[k]}_{(l)}=\0$ makes the reparameterization identifiable.
Consequently, if ${\bdelta}^{[k]}_{(l)} \neq \0$ for at least one $k \in \{0, 1, \ldots, K\}$, the covariate $x_l$ is deemed effective; if ${\bdelta}^{[k]}_{(l)} \neq \0$ for some $k \in \{1, \ldots, K\}$, the covariate effect is heterogeneous.

As such, DMMR can achieve feature selection and heterogeneity pursuit via a linearly-constrained sparse estimation of $\bdelta^{[k]}_{(l)}$ for $k=0,1,\ldots,K; l=1,\ldots,q$. Denote $\bDelta^{[k]}$ as a $q\times p$ matrix with the $l$th row being $\bdelta^{[k]}_{(l)}$. Now suppose we observe $n$ independent samples of counts, $\bsm_i$, $i  = 1,\ldots, n$. The optimization problem can be expressed as
\begin{equation}\label{eq:optimization}
\begin{aligned}
& \min_{\bTheta} \quad - \frac{1}{n} \sum_{i=1}^{n}\log f(\bsm_i;\bTheta) + \lambda_1 \mathcal{P}_{\gamma}(\bDelta^{[0]}) + \lambda_2 \sum_{k=1}^K \mathcal{P}_{\gamma}(\bDelta^{[k]}); \\
& \quad \qquad \text{s.t.} \quad \bDelta^{[0]}\1=\0,\ \bDelta^{[k]}\1 = \0, \ {\bbeta_{0}^{[k]}}\trans \1 = 0, \ k = 1, \ldots, K;
\sum_{k=1}^K \bDelta^{[k]} = \0,
\end{aligned}
\end{equation}
\rev{where the first term corresponds to the negative log-likelihood function with $f$ defined in \eqref{eq:dmmrdensity}, and the second and the third terms are the regularization terms for the common effects and the heterogeneous effects, respectively. Here $\mathcal{P}_{\gamma}(\cdot)$ is an adaptive sparsity-inducing penalty function with $\gamma$ controlling its adaptivity, and $\lambda_1$ and $\lambda_2$ are tuning parameters. The constraints are inherited from the symmetric link function and the reparameterization; more elaborations are provided in Supplement~\ref{supp:1:constraints}.}

\rev{
Our method aims to cluster the $n$ samples into $K$ latent subgroups, where $K$ is unknown in advance, and to identify covariates with either homogeneous or heterogeneous effects on the microbiome compositional profiles across the clusters. There are many choices of penalties. Here we employ the adaptive group lasso penalty:
\begin{equation} \label{penalty}
 \mathcal{P}_{\gamma}(\bDelta^{[k]}) =  \sum_{l=1}^q w^{[k]}_{(l)} \|\bdelta^{[k]}_{(l)} \|, \qquad k = 0, 1, \dots, K,
\end{equation}
where $\|\cdot\|$ is the $\ell_2$-norm and the adaptive weights are defined as
$w^{[k]}_{(l)}=\|{\bdelta_{(l)}^{[k]}}^{(0)}\|^{-\gamma}$, $\gamma \ge 0$, with ${\bdelta_{(l)}^{[k]}}^{(0)}$ denoting an initial estimator \citep{yuan2006model}. The penalty encourages each parameter vector corresponding to a given covariate and cluster to shrink to zero. As such the covariate is either not effective, homogeneously effective, or heterogeneously effective across all compositions.
\rev{In practice, we set $\gamma = 1$ and compute the adaptive weights as
$w^{[k]}_{(l)} = 1\big/(\|\widehat{\bdelta}_{(l)}^{[k](0)}\| + \epsilon)$,
where $\widehat{\bdelta}_{(l)}^{[k](0)}$ is the DMMR estimate obtained at a near-zero regularization level ($\lambda = 0.001$), and $\epsilon = 10^{-6}$ is a small stabilization constant that prevents the weight from diverging. This choice follows the standard adaptive lasso framework \citep{zou2006adaptive}: groups with large initial estimates receive small penalties and vice versa.} For presentation, we reserve ``DMMR'' for the overall framework and refer to the unpenalized, penalized, and adaptively penalized estimators as DMMR-NP, DMMR-P, and DMMR-AP, respectively.
}

\section{Computation \& Theoretical Guarantee}
\label{s:computation}
\label{s:theory}

We have developed an EM algorithm coupled with the Alternating Direction Method of Multipliers (ADMM) to solve the problem in~\eqref{eq:optimization}. Bayesian Information Criterion (BIC) was used for model selection in numerical studies. See details in Supplement~\ref{supp:computation}.

Recall that $\bTheta$ is defined in Section~\ref{sec:2:dmmr} as the complete set of parameters prior to reparameterization.
Accordingly, we use $\bTheta^\star$ to denote the true parameters of the assumed model.
To facilitates theoretical invesstigation, we formulate the proposed approach as a generalized group lasso problem with special linear constraints.
Specifically, the objective~\eqref{eq:optimization} in Section~\ref{sec:2:heterogeneity}, without the adaptive weights, admits the following general formulation,
\begin{equation} \label{eq:obj_orig_para_group_lasso}
\max_{\bTheta} \quad
\Big\{l(\bTheta) -
n \lambda\sum_{g=1}^{G}
\|\bT_g \bTheta\|_2
\Big\}, \quad s.t. \quad \bD\bTheta = \bd,
\end{equation}
where $l(\cdot)$ is the log-likelihood function such that $l(\bTheta) = \sum_{i=1}^n \log f(\bsm_i\mid \bTheta)$, $\bT_g\bTheta = \bdelta_{(l)}^{[k]}$ for $g = qk+l, G = (K+1)q$ are associated with parameters being penalized for covariates identification, and the expression $\bD\bTheta = \bd$ collects all linear constraints.
For simplicity, we have set $\lambda_1=\lambda_2 = \lambda$. The expressions of $\bT_g$, $\bD$, and $\bd$ are given in Supplement~\ref{supp:proof}.

Theorem \ref{thm:group_lasso_ini} and  Corollary \ref{thm:group_lasso_adp} demonstrate the consistency of our estimators and the zero-sign consistency \citep{she2008sparse} of the adaptive version. Derivations are given in Supplement~\ref{supp:proof}.

\begin{theorem}[Estimation Consistency]
  \label{thm:group_lasso_ini}
  Consider the model with fixed $p,q,K$. Suppose the regularity  conditions in Assumption~\ref{supp:assumption} hold.
  Choose $\lambda=O(n^{-1/2})$.
  Then there exists a local optimizer
  $\widehat{\bTheta}^{(0)}$ of the problem in \eqref{eq:obj_orig_para_group_lasso} such that $\sqrt n(\widehat{\bTheta}^{(0)} - \bTheta^\star) =O_p(1)$.

\end{theorem}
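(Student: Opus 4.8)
The plan is to establish $\sqrt{n}$-consistency via the standard argument for penalized $M$-estimators with a diverging-at-rate penalty, adapted to the linearly constrained setting. First I would reparametrize the problem to remove the equality constraints $\bD\bTheta=\bd$: since these are affine, the feasible set is an affine subspace, so I can write $\bTheta = \bTheta^\star + \bU\bgamma$ for a matrix $\bU$ whose columns span $\ker(\bD)$ and a free parameter $\bgamma$ of reduced dimension. This turns \eqref{eq:obj_orig_para_group_lasso} into an unconstrained penalized maximization in $\bgamma$, with the true value corresponding to $\bgamma=\0$. The key point is that the penalty terms $\|\bT_g\bTheta\|_2$ become $\|\bT_g\bU\bgamma + \bT_g\bTheta^\star\|_2$, which are still convex (in fact Lipschitz) in $\bgamma$, so the penalty perturbation is $O(n\lambda)$ uniformly on any bounded set.

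Next I would carry out the classic localization argument: set $\bTheta = \bTheta^\star + n^{-1/2}\bv$ with $\bv$ in the reduced tangent space (i.e.\ $\bv = \bU\bgamma$), and show that for any $\epsilon>0$ there is a constant $C$ large enough that, with probability at least $1-\epsilon$, the objective evaluated at $\|\bv\|=C$ is strictly smaller than at $\bv=\0$. Expanding the log-likelihood difference by a second-order Taylor expansion around $\bTheta^\star$ gives
\begin{equation*}
l(\bTheta^\star + n^{-1/2}\bv) - l(\bTheta^\star) = n^{-1/2}\bv\t \nabla l(\bTheta^\star) - \tfrac{1}{2}\bv\t \bigl(-n^{-1}\nabla^2 l(\bTheta^\star)\bigr)\bv + o_p(\|\bv\|^2),
\end{equation*}
where, by the regularity conditions in Assumption~\ref{supp:assumption} (finite and nonsingular Fisher information restricted to the tangent space, dominated third derivatives, interchange of differentiation and integration), $n^{-1/2}\nabla l(\bTheta^\star) = O_p(1)$ by the CLT and $-n^{-1}\nabla^2 l(\bTheta^\star) \toP I(\bTheta^\star) \succ 0$ on the relevant subspace by the LLN. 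The linear term is $O_p(\|\bv\|)$ and the quadratic term is bounded below by $c\|\bv\|^2$ for some $c>0$ on the tangent space, so for $\|\bv\|=C$ large the quadratic term dominates the linear term. For the penalty, $\bigl|\sum_g \|\bT_g\bTheta\|_2 - \sum_g\|\bT_g\bTheta^\star\|_2\bigr| \le \sum_g \|\bT_g(\bTheta-\bTheta^\star)\|_2 = O(n^{-1/2}\|\bv\|)$ by the reverse triangle inequality, so the penalty contributes $n\lambda \cdot O(n^{-1/2}\|\bv\|) = O(n^{1/2}\lambda \|\bv\|) = O(\|\bv\|)$ since $\lambda = O(n^{-1/2})$. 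Hence the penalty perturbation is of the same order as the linear term and is likewise dominated by the quadratic term for $C$ large. This yields a local maximizer $\widehat{\bgamma}$ with $\|\widehat{\bgamma}\| = O_p(n^{-1/2})$, and mapping back through $\bU$ gives $\sqrt{n}(\widehat{\bTheta}^{(0)} - \bTheta^\star) = O_p(1)$.

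One technical wrinkle I would flag is that the likelihood is genuinely concave only in a neighborhood of $\bTheta^\star$ (the $Q$-function non-concavity remarked on in Section~\ref{s:computation} reflects the global picture), so the statement is about a \emph{local} optimizer; the localization argument above produces exactly such a local optimizer inside the $n^{-1/2}$-ball and does not require global concavity. The main obstacle, and the part needing the most care, is verifying that the Fisher information matrix of the Dirichlet-multinomial mixture — restricted to the tangent space $\ker(\bD)$ and under the clr-link reparametrization — is positive definite; this requires the model to be identifiable on that subspace (no label-switching ambiguity once a labeling is fixed, distinct mixture components, and the softmax/clr parametrization being injective given the sum-to-zero constraints) and the relevant moments of the Dirichlet-multinomial to exist, both of which should be packaged into Assumption~\ref{supp:assumption}. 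A secondary point is handling the non-differentiability of the group-lasso penalty at $\bT_g\bTheta=\0$, which is dispatched cleanly by the triangle-inequality bound above rather than by subgradient calculus, since we only need an upper bound on the penalty increment.
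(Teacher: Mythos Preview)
Your proposal is correct and follows essentially the same route as the paper: the paper's proof likewise carries out the localization argument on an $n^{-1/2}$-sphere in the constrained direction set $\{\bu:\bD\bu=\0\}$, Taylor-expands $l(\bTheta^\star+n^{-1/2}\bu)-l(\bTheta^\star)$ to second order, and bounds the penalty increment via the triangle inequality to get a contribution of order $\lambda\sqrt{n}\|\bu\|=O(\|\bu\|)$, which is then dominated by the negative-definite quadratic term. The only cosmetic differences are that the paper restricts directly to $\bD\bu=\0$ rather than introducing an explicit null-space basis $\bU$, and it first drops the groups with $\bT_g\bTheta^\star=\0$ from the penalty bound (since those terms can only make $D_n$ more negative); neither changes the argument.
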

\begin{corollary}[Selection Consistency]
  \label{thm:group_lasso_adp}
  Suppose we use the adaptive group lasso:
  \begin{equation} \label{eq:obj_orig_para_adp_group_lasso}
    \max_{\bTheta} \quad \Big\{l(\bTheta) - n
    \lambda\sum_{g=1}^{G}w_g\\|\bT_g \bTheta\|_2 \Big\}, \quad
    s.t. \quad \bD\bTheta = \bd,
  \end{equation}
  and choose weights based on the non-adaptive estimator in Theorem~\ref{thm:group_lasso_ini}:
  \begin{equation*}
    w_g =\|\bT_g \widehat{\bTheta}^{(0)}\|^{-\gamma}
    = \|\widehat{\bdelta}_{(l)}^{[k]} \|^{-\gamma}.
  \end{equation*}
  Choose $\lambda$ satisfying $\lambda n^{(\gamma+1)/2}\to \infty$ and $\lambda n^{1/2}\to 0$.
  Then there exists a local optimizer $\widehat{\bTheta}_\lambda^\gamma$ of the problem in \eqref{eq:obj_orig_para_adp_group_lasso} such that $\sqrt n(\widehat{\bTheta}_\lambda^\gamma-\bTheta^\star)=O_p(1)$, and $\lim_{n \to \infty}\mathrm{Pr}(\bT_g\widehat \bTheta_\lambda^\gamma=\0) \to 1$ for any group $g$ with $\bT_g \bTheta^\star=\0$.
\end{corollary}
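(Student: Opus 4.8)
The plan is to adapt the classical adaptive-lasso argument of Zou (2006), in its grouped form (Wang and Leng, 2008), to the present constrained-likelihood setting, treating the linear constraints $\bD\bTheta=\bd$ by working directly on the affine set they cut out. There are two stages: first establish the $\sqrt n$-rate of a suitable local optimizer $\widehat\bTheta_\lambda^\gamma$ via an ``argmin of a convex process'' argument, then upgrade it to exact group selection with an oracle-plus-KKT argument.

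For the rate, reparametrize $\bTheta=\bTheta^\star+\bu/\sqrt n$; since $\bTheta$ and $\bTheta^\star$ both satisfy $\bD\bTheta=\bd$, the increment lies in $\mathrm{null}(\bD)$, so it suffices to minimize over $\bu\in\mathrm{null}(\bD)$ the process
\[
V_n(\bu)=-\big[l(\bTheta^\star+\bu/\sqrt n)-l(\bTheta^\star)\big]
+n\lambda\sum_{g=1}^{G}w_g\big(\|\bT_g(\bTheta^\star+\bu/\sqrt n)\|_2-\|\bT_g\bTheta^\star\|_2\big).
\]
Under the regularity conditions of Assumption~\ref{supp:assumption}, a second-order expansion yields $l(\bTheta^\star+\bu/\sqrt n)-l(\bTheta^\star)=\bu\trans\bW_n-\tfrac12\bu\trans\mathcal{I}\bu+o_p(1)$ uniformly on compacts, where $\bW_n=n^{-1/2}\nabla l(\bTheta^\star)\toD N(\0,\mathcal{I})$ and the Fisher information $\mathcal{I}=\mathcal{I}(\bTheta^\star)$ is positive definite on $\mathrm{null}(\bD)$. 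Split the groups at the true support $\mathcal{A}=\{g:\bT_g\bTheta^\star\neq\0\}$. For $g\in\mathcal{A}$, Theorem~\ref{thm:group_lasso_ini} gives $w_g\toP\|\bT_g\bTheta^\star\|_2^{-\gamma}$, and differentiability of $\|\cdot\|_2$ away from $\0$ makes the corresponding summand $O_p(\sqrt n\,\lambda)\to0$ since $\lambda n^{1/2}\to0$. For $g\notin\mathcal{A}$, $\bT_g\widehat\bTheta^{(0)}=O_p(n^{-1/2})$ forces $w_g\gtrsim n^{\gamma/2}$ with probability tending to one, so the summand is $\sqrt n\,\lambda\,w_g\|\bT_g\bu\|_2\gtrsim\lambda n^{(\gamma+1)/2}\|\bT_g\bu\|_2$, which diverges unless $\bT_g\bu=\0$. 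Hence $V_n$ converges to $V(\bu)=-\bu\trans\bW+\tfrac12\bu\trans\mathcal{I}\bu$ on $\mathcal{U}=\{\bu:\bD\bu=\0,\ \bT_g\bu=\0\ \forall\,g\notin\mathcal{A}\}$ and to $+\infty$ elsewhere; since each $V_n$ is eventually convex on every compact set (local concavity of $l$) and $V$ has a unique minimizer on $\mathcal{U}$, the standard argmin-convergence theorem (Geyer, 1994; Knight and Fu, 2000) gives $\sqrt n(\widehat\bTheta_\lambda^\gamma-\bTheta^\star)\toD\arg\min V\in\mathcal{U}$, i.e.\ the $O_p(1)$ rate.

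For selection consistency, let $\widetilde\bTheta$ be the restricted (``oracle'') maximizer of $l(\bTheta)-n\lambda\sum_{g\in\mathcal{A}}w_g\|\bT_g\bTheta\|_2$ over $\{\bD\bTheta=\bd,\ \bT_g\bTheta=\0\ \forall\,g\notin\mathcal{A}\}$; the argument above, restricted to $\mathcal{U}$, shows $\widetilde\bTheta$ is $\sqrt n$-consistent. It then suffices to verify that, with probability tending to one, $\widetilde\bTheta$ satisfies the first-order (KKT) conditions of the full problem~\eqref{eq:obj_orig_para_adp_group_lasso}, so that $\widehat\bTheta_\lambda^\gamma=\widetilde\bTheta$ on that event. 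With a Lagrange multiplier for $\bD\bTheta=\bd$, the conditions for the active groups hold automatically (smooth penalty there, $\widetilde\bTheta$ stationary on $\mathcal{U}$); for each $g\notin\mathcal{A}$ one needs the relevant projected score to have $\ell_2$-norm at most $n\lambda w_g$. The left-hand side is $O_p(\sqrt n)$ by the score CLT at the $\sqrt n$-consistent $\widetilde\bTheta$, while $n\lambda w_g\gtrsim\lambda n^{(\gamma+2)/2}$, so their ratio is $\asymp\lambda n^{(\gamma+1)/2}\to\infty$; hence the inequality holds with probability tending to one, and on that event $\bT_g\widehat\bTheta_\lambda^\gamma=\bT_g\widetilde\bTheta=\0$ for every such $g$, which is the assertion.

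The step I expect to be the main obstacle is the bookkeeping forced by the constraints and the singular geometry. The penalized blocks $\bT_g\bTheta=\bdelta_{(l)}^{[k]}$ are themselves tied together by $\sum_{k}\bdelta_{(l)}^{[k]}=\0$, part of $\bD\bTheta=\bd$, so the stationarity system couples the group-norm subdifferentials with the Lagrange multipliers of $\bD\bTheta=\bd$, and one must pin down precisely which projection of the score enters the inactive-group inequality and check it is the one that is $O_p(\sqrt n)$; relatedly, $\mathcal{I}(\bTheta^\star)$ is singular on the ambient parameter space and only positive definite on $\mathrm{null}(\bD)$, so every ``strict convexity''/``unique minimizer'' claim must be read on that subspace, which in turn needs Assumption~\ref{supp:assumption} to supply exactly this restricted nondegeneracy. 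A secondary point is that $l$ need not be globally concave, so the existence of the relevant \emph{local} optimizer is obtained on a shrinking $\sqrt n$-ball around $\bTheta^\star$ on which the penalized objective is convex, after which the convex-analytic KKT reasoning applies verbatim.
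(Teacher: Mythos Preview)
Your proposal is correct but follows a different route from the paper. For the $\sqrt n$-rate, the paper uses a Fan--Li ``ball'' argument: it bounds $D_n^\gamma(\bu)=l_\lambda^\gamma(\bTheta^\star+n^{-1/2}\bu)-l_\lambda^\gamma(\bTheta^\star)$ from above by the quadratic $-\tfrac12\bu\trans\bI(\bTheta^\star)\bu$ plus a penalty remainder controlled by $a_n=\max_{g\le g_0}\lambda_g=O(\lambda)$, so $\lambda\sqrt n\to0$ alone gives a local optimizer in an $O(n^{-1/2})$ ball; your epigraph/argmin-convergence argument gives more (a limiting distribution supported on $\mathcal U$) but needs the local-convexity repair you already flag. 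For selection, the paper does \emph{not} construct an oracle estimator and check KKT; instead it writes the KKT system of the full problem at the actual local optimizer $\widehat\bTheta$, eliminates the constraint multiplier via the projection $\bM=\bI(\bTheta^\star)^{-1}-\bI(\bTheta^\star)^{-1}\bD\trans(\bD\bI(\bTheta^\star)^{-1}\bD\trans)^{+}\bD\bI(\bTheta^\star)^{-1}$, and reaches a contradiction: if some inactive $\bT_g\widehat\bTheta\neq\0$, then in the identity for $\sqrt n\,\bT_g\widehat\bTheta$ the term $\lambda_g n^{1/2}\bT_g\bM\,\partial\|\bT_g\widehat\bTheta\|_2$ is of order $\lambda n^{(\gamma+1)/2}\to\infty$ while every other term is $O_p(1)$. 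This direct KKT-with-$\bM$ device is precisely what resolves the ``which projection of the score'' bookkeeping you identify as the main obstacle, and it avoids the separate analysis of a restricted estimator; your oracle route is more modular and closer to Zou/Wang--Leng, but requires that extra construction.
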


\section{Simulation}
\label{s:simulation}

\subsection{Competing Methods}

We conducted comprehensive simulation studies to evaluate the performance of our proposed DMMR framework \rev{across the non-penalized (DMMR-NP), penalized (DMMR-P), and adaptively penalized (DMMR-AP) variants} from multiple perspectives, including clustering, feature selection, heterogeneity pursuit, and parameter estimation.
To evaluate clustering performance, we considered several baseline approaches that do not incorporate covariate information, including K-Means clustering applied to clr-transformed compositional data (K-Means), hierarchical clustering using Bray-Curtis dissimilarity with complete linkage (HC), and the DMM model without covariate adjustment (DMM).

\subsection{Simulation Setup}

We generated $n$ independent synthetic microbial read counts of dimension $p$ and their associated covariates of dimension $q$ according to DMMR. Specifically, we set $K$ clusters, with cluster probability $\bpi = (1/K) \1_K$ and over-dispersion parameter $\btheta = \theta \1_K$ the same across the clusters. For simplicity, the total read count was fixed at $M=10000$ for all observations. The covariate vector $\bx \in \mathbb{R}^{q}$ was generated from $N_q(\0, \bI)$.
We set the first $q_0$ covariates as relevant covariates, and the first $q_{00}$ covariates from those relevant covariates as heterogeneous covariates.
The $\bdelta_{(l)}^{[k]}$s were generated based on the following mechanism:
\noindent (1) Intercept related $\bbeta_{0}^{[k]}$ for $k=1,\ldots, K$: Each element of $\bbeta_{0}^{[1]}, \ldots, \bbeta_{0}^{[K]}$ was independently sampled from $\mathrm{Unif}(-2,2)$.\\
\noindent (2) Heterogeneity related $\bdelta_{(l)}^{[k]}$ for $l=1,\ldots,q_{00};\, k=1,\ldots, K$:
  \rev{Each element of $\bdelta_{(l)}^{[k]}$ was first independently sampled from $\mathrm{Unif}\{ (-1,-0.5)\cup (0.5,1)\}$. The generated coefficients were then centered across taxa and centered across clusters to satisfy the identifiability constraints. They were further rescaled as
  $\bdelta_{(l)}^{[k]} \gets \bdelta_{(l)}^{[k]} / \sqrt{\sum_{k=1}^K \|\bdelta_{(l)}^{[k]}\|^2} \cdot \sqrt{pK}\, s$, so that $\sqrt{\sum_{k=1}^K \|\bdelta_{(l)}^{[k]}\|^2} = \sqrt{pK}\cdot s$, where $s$ controls the overall signal strength.}\\
\noindent (3) Homogeneity related $\bdelta_{(l)}^{[0]}$ for $l=q_{00}+1,\ldots,q_0$:
  \rev{Each element of $\bdelta_{(l)}^{[0]}$ was first independently sampled from $\mathrm{Unif}\{ (-1,-0.5)\cup (0.5,1)\}$ and then similarly centered and rescaled as above, such that
  it satisfies the zero-sum constraint and $\|\bdelta_{(l)}^{[0]}\| = \sqrt{p}\cdot s$.}\\
\noindent (4) All other $\bdelta_{(l)}^{[k]}$s were set to zeros.

To mimic the real application, we mainly focused on the setting with $n=200$, $K=2$, $p=20$, $q=20$, $q_0=10$, and $q_{00}=5$. The dispersion parameter $\theta$ took values in $\{0.05, 0.1\}$, and the signal strength was set as \rev{$s \in \{0.2, 0.4, 0.6 \}$}.

\rev{To further assess scalability and stability, we additionally examined settings with varying read counts, higher-dimensional covariates ($q \in \{50, 100\}$), and a larger number of clusters ($K = 5$); the detailed settings and corresponding results are reported in Supplement \ref{s:supp_sim}.}

\subsection{Evaluation Metrics}

For clustering, we assessed the accuracy of selecting the true number of clusters ($K$) using the corresponding model-selection criterion for each method; see details in Supplement~\ref{supp:selection}. When $K$ was identified, we further evaluated clustering accuracy using Cohen's kappa coefficient, with cluster labels optimally aligned to mitigate the impact of label switching.

For variable selection, we reported sensitivity (true positive rate, TPR), specificity (true negative rate, TNR), and the $F_1$ score for selecting both relevant covariates and heterogeneous covariates. \rev{Sensitivity is defined as the proportion of truly active covariates correctly identified, whereas specificity is the proportion of truly inactive covariates correctly classified as such. Because the trade-off between sensitivity and specificity is often unavoidable, we additionally reported the $F_1$ score, the harmonic mean of precision and sensitivity.}

\rev{For parameter estimation, we quantified accuracy via the \revnew{relative root-mean-squared errors} (rMSE) between the estimated parameters and their true values. \revnew{For a generic parameter $\boldeta$ with entries $\{\eta_j\}$ and estimate $\widehat\boldeta$, the rMSE is defined as $\big(\sum_{j}(\widehat\eta_j-\eta_j)^2 \big/ \sum_{j}\eta_j^2\big)^{1/2}$, where the summation runs over all entries of the parameter.} We reported rMSE for the mixing proportions $\bpi$, the dispersion parameter $\btheta$, and the regression coefficients $\bB$ and $\bDelta$.}

\revnew{The simulation under each setting is repeated 200 times, and we report means
and standard deviations of the metrics across all replications.}

\subsection{Simulation Results}

\subsubsection{Clustering Performance}

Table~\ref{tab:selection_Kappa} summarizes the baseline clustering results for $K=2$ and $q=20$;
\rev{additional clustering results for varying read counts, higher-dimensional covariates, and a larger number of clusters are reported in Supplement Table~\ref{sim:tab:all_K_selection_add}.}

When the covariate signal was weak ($s=0.2$), clustering accuracy remained high for the model-based methods, with DMMR-NP and DMMR-AP generally outperforming DMMR-P. As the signal increased ($s=0.4$ and $s=0.6$), all DMMR variants achieved near-perfect clustering, whereas methods that ignored covariates, such as K-Means, HC, and DMM, degraded substantially. Overall, stronger covariate signals made the clustering structure more apparent for DMMR, highlighting the benefit of covariate adjustment and regularization.

The over-dispersion parameter $\theta$ also affected clustering performance. Stronger over-dispersion increases variability in observed counts, and as expected, makes cluster recovery more challenging.
The impact of over-dispersion was not uniform: the distance-based methods (K-Means and HC) and the unpenalized DMMR-NP tended to be more sensitive, whereas DMMR-P and especially DMMR-AP were generally more stable.

\rev{The results in Supplement Table~\ref{sim:tab:all_K_selection_add} showed that the above observations were largely preserved under more complex settings. When sequencing depths moderately varied across samples, clustering performance remained stable. In higher-dimensional settings, the DMMR variants continued to achieve strong clustering performance and generally compared favorably with methods that ignored covariate information.
When the number of clusters increased to $K=5$, clustering became more challenging overall, and the benefit of regularization became more apparent: DMMR-P and especially DMMR-AP achieved substantially higher Acc($K$) and stronger label agreement than DMMR-NP.}

\begin{table}
  \centering
  \caption{Simulation: Accuracy of selecting $K$ (Acc($K$)) and the Kappa statistics (Kappa)}
  \vspace{-1em}
  \label{tab:selection_Kappa}
  \begin{tabular}{lrrrr}
    \toprule
    & \multicolumn{2}{c}{$\theta = 0.05$} & \multicolumn{2}{c}{$\theta = 0.10$} \\
    \cmidrule(l{3pt}r{3pt}){2-3} \cmidrule(l{3pt}r{3pt}){4-5}
    & Acc($K$) & Kappa & Acc($K$) & Kappa\\
    \midrule
    & \multicolumn{4}{c}{$s = 0.2$} \\
    K-Means & 1.000 & 0.991 (0.014) & 0.985 & 0.986 (0.016)\\
    HC & 0.875 & 0.907 (0.104) & 0.675 & 0.841 (0.103)\\
    DMM & 1.000 & 0.992 (0.014) & 1.000 & 0.985 (0.020)\\
    DMMR-NP & 0.995 & 1.000 (0.001) & 0.735 & 0.999 (0.004)\\
    DMMR-P & 0.600 & 0.995 (0.011) & 0.820 & 0.986 (0.018)\\
    DMMR-AP & 0.945 & 1.000 (0.002) & 0.920 & 0.995 (0.011)\\
    \midrule
    & \multicolumn{4}{c}{$s = 0.4$} \\
    K-Means & 0.735 & 0.945 (0.049) & 0.575 & 0.944 (0.040)\\
    HC & 0.175 & 0.730 (0.222) & 0.075 & 0.756 (0.178)\\
    DMM & 0.950 & 0.884 (0.118) & 0.965 & 0.878 (0.107)\\
    DMMR-NP & 1.000 & 1.000 (0.002) & 0.995 & 0.999 (0.003)\\
    DMMR-P & 0.965 & 0.999 (0.003) & 0.975 & 0.998 (0.005)\\
    DMMR-AP & 1.000 & 1.000 (0.001) & 0.995 & 0.999 (0.003)\\
    \midrule
    & \multicolumn{4}{c}{$s = 0.6$} \\
    K-Means & 0.080 & 0.881 (0.079) & 0.035 & 0.868 (0.078)\\
    HC & 0.015 & 0.154 (0.111) & 0.005 & 0.324 (-)\\
    DMM & 0.805 & 0.452 (0.262) & 0.580 & 0.519 (0.246)\\
    DMMR-NP & 0.960 & 1.000 (0.001) & 0.935 & 1.000 (0.002)\\
    DMMR-P & 0.905 & 1.000 (0.002) & 0.890 & 0.999 (0.004)\\
    DMMR-AP & 0.965 & 1.000 (0.001) & 0.935 & 0.995 (0.068)\\
    \bottomrule
  \end{tabular}
\end{table}

\subsubsection{Feature Selection}

Table~\ref{tab:selection_relevant_main} summarizes the baseline relevant-covariate selection results, and Supplement Table~\ref{sim:tab:selection_bic_heterogeneous} summarizes the heterogeneous-covariate selection results \revnew{(i.e., how well covariates with cluster-specific effects are identified)}, for $K=2$ and $q=20$; \rev{additional results
are reported in Supplement Tables~\ref{sim:tab:selection_bic_relevant_add} and \ref{sim:tab:selection_bic_heterogeneous_add}.}

Across most scenarios, DMMR-AP substantially improved variable-selection accuracy compared with DMMR-P. This improvement was particularly pronounced when the covariate signal was weak ($s=0.2$).
As the signal strength increased ($s=0.4$ and $s=0.6$), both methods identified most true signals, but DMMR-AP still provided a more favorable balance between sensitivity and specificity.
\rev{Overall, these results showed that adaptive weighting strengthened the heterogeneity-pursuit component of DMMR while better preserving clustering accuracy.}

\begin{table}
  \caption{Simulation: Relevant covariates selection performance when $K=2$, $q=20$, and $M_i$ is fixed across samples.}
  \vspace{-1em}
  \label{tab:selection_relevant_main}
  \centering
  \begin{tabular}{llccc}
  \toprule
  $s$ & Model & Sensitivity & Specificity & $F_1$\\
  \midrule
  \multicolumn{5}{c}{$\theta = 0.05$} \\
  \multirow{2}{*}{0.2} & DMMR-P & 0.28 (0.34) & 1.00 (0.01) & 0.34 (0.38)\\
    & DMMR-AP & 1.00 (0.00) & 0.98 (0.05) & 0.99 (0.02)\\
  \multirow{2}{*}{0.4} & DMMR-P & 1.00 (0.00) & 0.75 (0.29) & 0.90 (0.10)\\
    & DMMR-AP & 1.00 (0.00) & 1.00 (0.02) & 1.00 (0.01)\\
  \multirow{2}{*}{0.6} & DMMR-P & 0.98 (0.10) & 0.49 (0.39) & 0.80 (0.12)\\
    & DMMR-AP & 0.99 (0.07) & 0.99 (0.08) & 0.99 (0.05)\\
  \midrule
  \multicolumn{5}{c}{$\theta = 0.10$} \\
  \multirow{2}{*}{0.2} & DMMR-P & 0.04 (0.11) & 1.00 (0.00) & 0.06 (0.15)\\
    & DMMR-AP & 0.79 (0.40) & 0.98 (0.04) & 0.78 (0.39)\\
  \multirow{2}{*}{0.4} & DMMR-P & 1.00 (0.04) & 0.84 (0.22) & 0.93 (0.08)\\
    & DMMR-AP & 1.00 (0.04) & 0.99 (0.04) & 0.99 (0.03)\\
  \multirow{2}{*}{0.6} & DMMR-P & 0.97 (0.12) & 0.57 (0.38) & 0.82 (0.12)\\
    & DMMR-AP & 0.99 (0.08) & 0.97 (0.13) & 0.98 (0.07)\\
  \bottomrule
  \end{tabular}
\end{table}

\subsubsection{Parameter Estimation}

\rev{Supplement Table~\ref{sim:tab:selection_bic_mse} summarizes the baseline estimation results \revnew{(i.e., the accuracy of estimating the model parameters)} for $K=2$ and $q=20$; additional results
are reported in Supplement Table~\ref{sim:tab:all_mse_add}.}

The mixing proportions $\bpi$ were estimated with similar accuracy across methods. For the regression parameters $\bB$ and $\bDelta$, DMMR-AP consistently achieved the smallest rMSE, while DMMR-P usually improved upon DMMR-NP; the gap between the latter two was smallest when the signal was weak ($s=0.2$). The dispersion parameter $\btheta$ was best estimated by DMMR-AP, but overall its estimation was less stable than that of the regression coefficients. This is likely because the estimation is sensitive to how the model allocates variability between fitted covariate effects and residual over-dispersion \citep{zou2006adaptive}.

\rev{The same pattern held under heterogeneous read counts and became more pronounced in harder settings. In Supplement Table~\ref{sim:tab:all_mse_add}, DMMR-NP deteriorated markedly when $q$ or $K$ increased, while DMMR-AP remained the most accurate estimator for $\bB$ and $\bDelta$.}

\section{\texorpdfstring{Application to the Upper-airway Microbiota and\\Asthma Study on Children}{Application to the Upper-airway Microbiota and Asthma Study on Children}}
\label{s:application}

\vspace{-0.5em}

Exacerbations of asthma impose a significant burden on children, their families, and the healthcare system, and can contribute to long-term declines in lung function. A critical phase in asthma management is the ``Yellow Zone'' (YZ), which is a period of early loss of asthma control during which patients are at elevated risk of progression to severe exacerbation.

The Step-Up Yellow Zone Inhaled Corticosteroids to Prevent Exacerbations (STICS) trial \citep{jackson2018quintupling} studied school-aged children with mild-to-moderate persistent asthma and evaluated whether quintupling inhaled corticosteroids at YZ onset could prevent severe exacerbation. Children aged 5 to 11 years received low-dose inhaled corticosteroids for 48 weeks and were then randomized to continue the same dose or to quintuple it during YZ episodes. They found no significant reduction in severe exacerbations under dose escalation.

\citet{zhou2019upper} analyzed a subset of 214 children from the STICS trial, to investigate the role of the upper-airway microbiota in asthma control. For those participants, nasal swab samples were collected at two clinically relevant time points: (1) at the randomization visit (RD), when participants were asymptomatic, and (2) at the onset of the first YZ episode, prior to administration of the assigned YZ intervention. Total genomic DNA was extracted from
nasal blow samples, and the V1-V3 regions of the 16S rRNA gene were sequenced to generate a taxonomic profile. Reads were rarefied to 10,000 per sample and aggregated at the genus level.
Their hierarchical clustering found that airway microbiota colonization patterns are differentially associated with risk of loss of asthma control and severe exacerbation.

Motivated by \citet{zhou2019upper}, we reanalyzed this cohort to identify upper-airway microbiota clusters and assessed whether they were associated with future YZ episodes. In contrast to \citet{zhou2019upper}, DMMR incorporates demographic and clinical variables directly into the clustering model, allowing us to adjust for covariate effects and detect heterogeneity in how these factors relate to latent microbial communities.

\subsection{DMMR Analysis}
\rev{We applied the adaptively penalized DMMR estimator (DMMR for simplicity)} to the STICS data to identify upper-airway microbiome clusters and the demographic or clinical features that differentiate them.
The DMMR model was fitted to the upper-airway microbiota profiles of 214 subjects with 18 covariates, including age, BMI, number of oral steroid courses, IgE, gender, ethnicity, race, parental history of asthma, smoke exposure, pet exposure, eczema indicators, steroid use, antibiotic use, and viral infection status; categorical variables were expanded into dummy variables (Supplement Table~\ref{tab:app:covariates}). For the microbiome counts, we retained taxa with relative abundance of at least $0.5\%$ and grouped the remaining low-abundance taxa into ``Other'', yielding $p = 24$ taxa.
\rev{To assess sensitivity to this abundance threshold, we repeated the analysis with a lower cutoff of $0.1\%$, which retained $p = 41$ taxa.}

For initialization, we applied hierarchical clustering (HC) to the genus-level microbiome profiles using complete linkage and Bray--Curtis dissimilarity. The resulting cluster proportions initialized $\bpi$, and DM models fitted within each HC cluster provided initial estimates of $\btheta$ and $\bB^{[k]}$ for $k=1,\dots,K$. We refer to this initialization procedure as ``HC+DM''.

\subsection{Results}\label{app:results}

The number of clusters was selected as $K = 3$.
We assigned cluster names post hoc according to the dominant relative abundance patterns identified by DMMR: \textit{Strep-dominant}, \textit{Dolo/Coryne-dominant}, and \textit{Mixed-pathobiont}.

Table~\ref{app:tab:pi_theta} in Supplement \ref{supp:app:para} presents the estimated mixing proportions ($\bpi$) and the over-dispersion parameters ($\btheta$) for HC+DM and DMMR. The two approaches produced different cluster profiles. With HC+DM, the \textit{Dolo/Coryne-dominant} cluster was the largest (54\%), followed by the \textit{Strep-dominant} cluster (29\%), whereas DMMR assigned 52\% of subjects to the \textit{Strep-dominant} cluster, 33\% to the \textit{Dolo/Coryne-dominant} cluster, and 15\% to the \textit{Mixed-pathobiont} cluster. HC+DM also yielded larger over-dispersion estimates, suggesting greater within-cluster variability, whereas DMMR produced more coherent clusters.

Based on the fitted model parameters, each observation was assigned to a cluster using Bayes' rule. The relative abundances of the $24$ taxa are visualized in Figure~\ref{app:fig:rela_abund} through heatmaps. In these heatmaps, columns represent individual samples and rows represent microbial taxa, with cell color intensity indicating the normalized abundance of each taxon within a sample. \rev{These visualizations facilitated comparison of microbiome composition across clusters and revealed distinct taxonomic signatures. The central alluvial plot further showed that most discrepancies arose from samples reassigned between \textit{Strep-dominant} and \textit{Dolo/Coryne-dominant} clusters, while the \textit{Mixed-pathobiont} cluster remained largely consistent across the two approaches. To facilitate comparison, cluster labels were aligned across methods. For HC, the cluster labels were permuted to best align with the DMMR partition before applying the same descriptive names. Thus, the labels facilitate comparison but do not imply identical partitions or identical dominant-taxa structures across methods. As expected, the resulting partitions differed between DMMR and HC. Quantitatively, the Adjusted Rand Index (ARI) \citep{hubert1985comparing} between the two clustering results was $0.14$, indicating limited agreement between the two methods. These differences reflected the distinct clustering principles: HC is based on geometric dissimilarity, whereas DMMR uses a probabilistic model that incorporates covariate heterogeneity.}

\begin{figure}
    \centering
         \includegraphics[width=\linewidth]{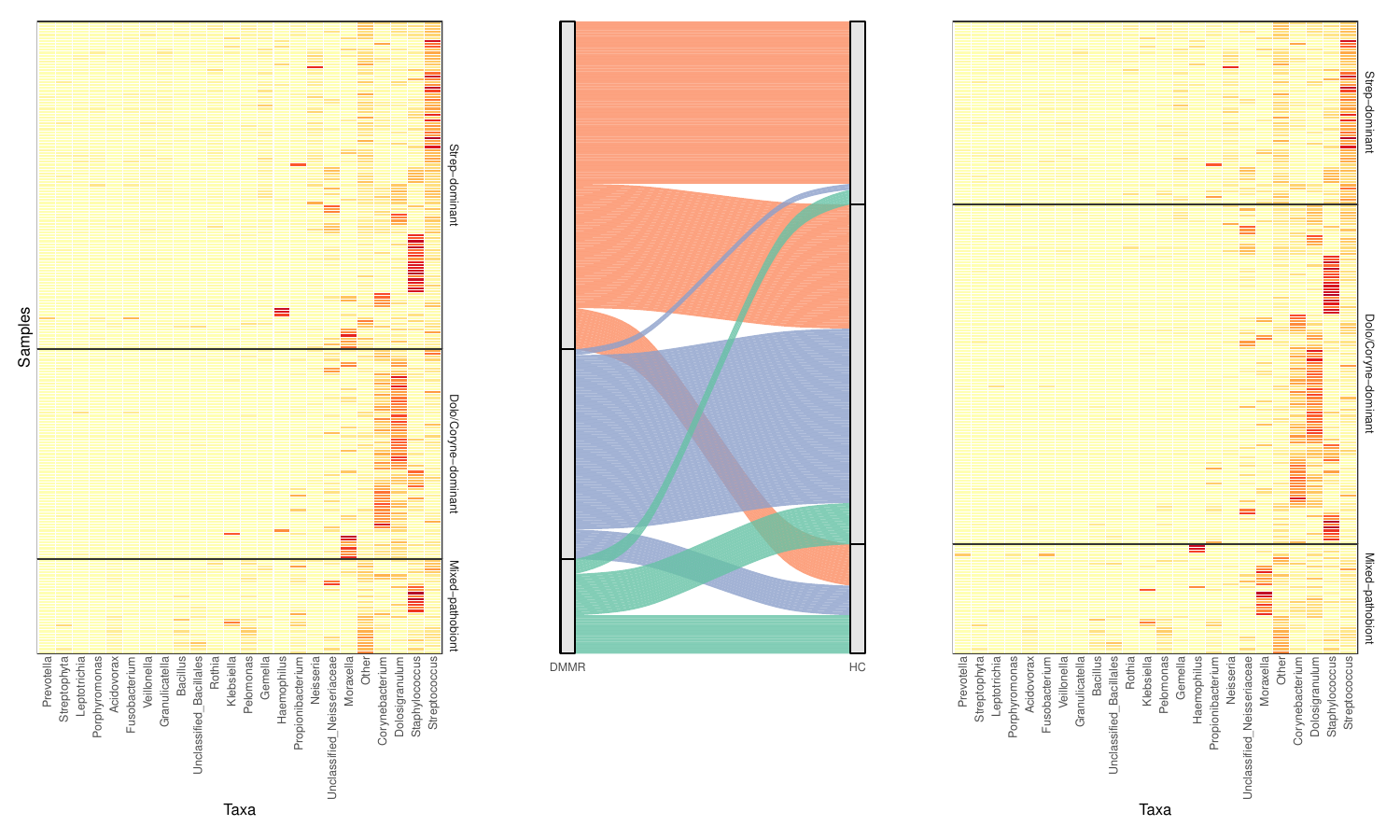}
    \caption{Application: Relative abundance heatmaps of taxa across clusters identified by DMMR (on the left) and HC (on the right). The central alluvial plot illustrates how individual samples correspond between clusters across the two approaches.}
    \label{app:fig:rela_abund}
\end{figure}

To investigate the sources of heterogeneity among the three identified clusters, we examined the estimated covariate coefficients from the DMMR model in Figure~\ref{app:fig:coef}. A total of $18$ relevant covariates were identified, including $15$ with homogeneous effects and $3$ with heterogeneous effects. \revnew{The $15$ homogeneous covariates have effects shared across all three clusters, whereas the $3$ heterogeneous covariates, namely, ``parent\_ast8'', ``eczema1'', and ``steroid1'', have cluster-specific effects that differ across clusters. Note that the cluster-specific effect of ``parent\_ast8'' is shrunk to exactly zero in the \textit{Mixed-pathobiont} cluster by the cluster-specific sparsity penalty.} For example, both ``eczema1'' and ``steroid1'' had significantly different coefficients across the three clusters. Focusing on their effects on the Staphylococcus and Streptococcus taxa, the results indicated that having eczema or using steroids was associated with a higher proportion of Staphylococcus and a lower proportion of Streptococcus within the \textit{Mixed-pathobiont} cluster. Conversely, in the \textit{Dolo/Coryne-dominant} and \textit{Strep-dominant} clusters, these covariates are associated with a lower abundance of Staphylococcus and a higher abundance of Streptococcus.
\rev{These patterns were supported by previous studies. \citet{azev2023nasal} reported that children with atopic dermatitis exhibit significantly increased nasal colonization by Staphylococcus aureus, suggesting that host inflammatory conditions can shape upper airway microbial communities. \citet{hartmann2021effects} demonstrated that corticosteroid treatment can significantly alter respiratory microbiome composition, including shifts in major phyla and specific taxa such as Streptococcus and Neisseria. These findings provided biological support for the association between eczema or using steroids and the cluster-specific microbial patterns observed in the STICS analysis.}

\begin{figure}
    \centering
    \includegraphics[width=\linewidth]{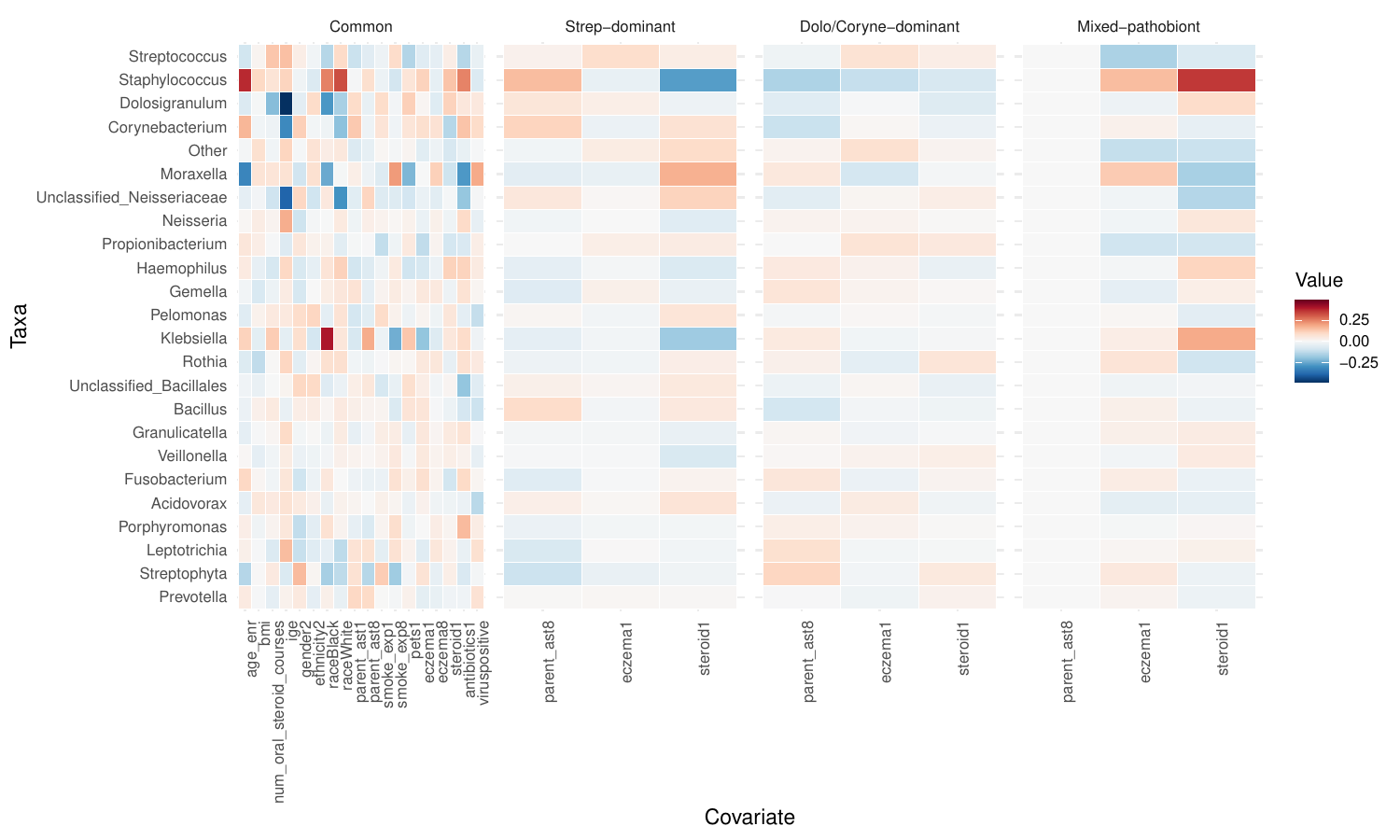}
    \caption{\rev{Application: Estimated covariate effects for the three clusters identified by DMMR. The common effects of all covariates and the cluster-specific effects of the selected heterogeneous covariates are illustrated.} \revnew{Among the three heterogeneous covariates, ``parent\_ast8'' has its cluster-specific effect shrunk to zero in the \textit{Mixed-pathobiont} cluster.}
    }
    \label{app:fig:coef}
\end{figure}

Additional results on model estimation, sensitivity analysis to the taxa abundance threshold, and the association between cluster membership and time to YZ episode are reported in Supplement~\ref{s:supp_application}. Overall, these results indicated that DMMR was reasonably stable with respect to the abundance threshold and led to distinct time-to-event distributions of YZ episodes across clusters.

\section{Discussion}
\label{s:discuss}

We have developed the Dirichlet-Multinomial Mixture Regression (DMMR) approach for clustering microbiome samples while adjusting for covariate effects and identifying sources of heterogeneity. By employing a novel symmetric link function, DMMR is capable of detecting subpopulations, identifying covariates/features that exhibit heterogeneous, common, or null effects on cluster formation, and highlighting key taxa with distinct abundance patterns across clusters. We have developed an R package that implements the proposed methodology.

There are several potential directions for future research. First, it is interesting to extend our theoretical results under a more general large data framework. Second, the concept of heterogeneity pursuit can be applied to other mixture models for clustering purposes. While the DM distribution is utilized here to model microbiome data due to its ability to handle over-dispersion, the idea can be applied to a variety of distributions, including Negative multinomial, multivariate normal, Poisson, or any other distributions that permit the characterization of common effects among clusters and cluster-specific effects. Third, with longitudinal microbiome data,  not only may there exist a set of microbial states, but also transitions between states over time may happen.  These microbial states themselves and the dynamic transitions of microbial states may have associations with disease or health status. Some related studies can be found in \citet{zhou2019upper,xiong2015changes,lee2017bacterial}. Our heterogeneity pursuit methods could be adapted to a unified Hidden Markov Model (HMM) to study microbial states and their transitions over time.

\section*{Acknowledgements}

The authors thank the reviewers and the Associate Editor for their very helpful suggestions.

\section*{Data Availability}

The data that support the findings in this paper are available from \citet{zhou2019upper} at \url{https://www.nature.com/articles/s41467-019-13698-x}.

\bibliography{mybibilo}{}

@article{yuan2006model,
  author = {Yuan, Ming and Lin, Yi},
  title = {Model Selection and Estimation in Regression with Grouped Variables},
  journal = {Journal of the Royal Statistical Society: Series B},
  year = {2006},
  volume = {68},
  number = {1},
  pages = {49--67},
  doi = {10.1111/j.1467-9868.2005.00532.x}
}

@article{liu1989limited,
  title={On the limited memory BFGS method for large scale optimization},
  author={Liu, Dong C and Nocedal, Jorge},
  journal={Mathematical Programming},
  volume={45},
  number={1},
  pages={503--528},
  year={1989}
}

@article{hubert1985comparing,
  title={Comparing partitions},
  author={Hubert, Lawrence and Arabie, Phipps},
  journal={Journal of Classification},
  volume={2},
  number={1},
  pages={193--218},
  year={1985}
}

@article{azev2023nasal,
  title={Microbiome in nasal mucosa of children and adolescents with allergic rhinitis: A systematic review},
  author={Azevedo, A. C. and Hil{\'a}rio, S. and Gon{\c{c}}alves, M. F. M.},
  journal={Children},
  volume={10},
  number={2},
  pages={226},
  year={2023},
  doi={10.3390/children10020226}
}

@article{hartmann2021effects,
  title={The Effects of Corticosteroids on the Respiratory Microbiome: A Systematic Review},
  author={Hartmann, Julia E. and Albrich, Werner C. and Dmitrijeva, Margarita and Kahlert, Christoph R.},
  journal={Frontiers in Medicine},
  volume={8},
  pages={588584},
  year={2021},
  doi={10.3389/fmed.2021.588584},
  pmid={33777968},
  pmcid={PMC7988087},
  url={https://www.frontiersin.org/articles/10.3389/fmed.2021.588584/full}
}

@article{ZhaoShi2015,
author = {Zhao, Qing and Shi, Xingjie and Huang, Jian and Liu, Jin and Li, Yang and Ma, Shuangge},
title = {Integrative analysis of ‘-omics’ data using penalty functions},
journal = {WIREs Computational Statistics},
volume = {7},
number = {1},
pages = {99-108},
year = {2015}
}

@article{pan2021statistical,
  title={Statistical analysis of microbiome data: the challenge of sparsity},
  author={Pan, Amy Y},
  journal={Current Opinion in Endocrine and Metabolic Research},
  volume={19},
  pages={35--40},
  year={2021},
  publisher={Elsevier}
}

@article{zhang2016zero,
  title={Zero-inflated negative binomial regression for differential abundance testing in microbiome studies},
  author={Zhang, Xinyan and Mallick, Himel and Yi, Nengjun},
  journal={Journal of Bioinformatics and Genomics},
  volume={2},
  number={2},
  pages={1--9},
  year={2016}
}

@article{zhou2019longitudinal,
  title =  "Longitudinal multi-omics of host--microbe dynamics in prediabetes",
  author = "Zhou, Wenyu and Sailani, M Reza and Contrepois, K{\'e}vin and Zhou, Yanjiao and Ahadi, Sara and Leopold, Shana R and Zhang, Martin J and Rao, Varsha and Avina, Monika and Mishra, Tejaswini and others",
  journal =  "Nature",
  volume = "569",
  number = "7758",
  pages = "663--671",
  year = "2019",
  publisher = "Nature Publishing Group"
}

@article{fang2023clustering,
  title={Clustering microbiome data using mixtures of logistic normal multinomial models},
  author={Fang, Yuan and Subedi, Sanjeena},
  journal={Scientific Reports},
  volume={13},
  number={1},
  pages={14758},
  year={2023},
  publisher={Nature Publishing Group UK London}
}

@article{nakatsu2015gut,
  title={Gut mucosal microbiome across stages of colorectal carcinogenesis},
  author={Nakatsu, Geicho and Li, Xiangchun and Zhou, Haokui and Sheng, Jianqiu and Wong, Sunny Hei and Wu, William Ka Kai and Ng, Siew Chien and Tsoi, Ho and Dong, Yujuan and Zhang, Ning and others},
  journal={Nature Communications},
  volume={6},
  number={1},
  pages={8727},
  year={2015},
  publisher={Nature Publishing Group UK London}
}

@article{li2022pursuing,
  title={Pursuing sources of heterogeneity in modeling clustered population},
  author={Li, Yan and Yu, Chun and Zhao, Yize and Yao, Weixin and Aseltine, Robert H and Chen, Kun},
  journal={Biometrics},
  volume={78},
  number={2},
  pages={716--729},
  year={2022},
  publisher={Wiley Online Library}
}

@article{wu2011linking,
  title={Linking long-term dietary patterns with gut microbial enterotypes},
  author={Wu, Gary D and Chen, Jun and Hoffmann, Christian and Bittinger, Kyle and Chen, Ying-Yu and Keilbaugh, Sue A and Bewtra, Meenakshi and Knights, Dan and Walters, William A and Knight, Rob and others},
  journal={Science},
  volume={334},
  number={6052},
  pages={105--108},
  year={2011},
  publisher={American Association for the Advancement of Science}
}

@article{zhong2019impact,
  title={Impact of early events and lifestyle on the gut microbiota and metabolic phenotypes in young school-age children},
  author={Zhong, Huanzi and Penders, John and Shi, Zhun and Ren, Huahui and Cai, Kaiye and Fang, Chao and Ding, Qiuxia and Thijs, Carel and Blaak, Ellen E and Stehouwer, Coen DA and others},
  journal={Microbiome},
  volume={7},
  number={1},
  pages={1--14},
  year={2019},
  publisher={BioMed Central}
}

@article{holmes2012dirichlet,
  title={Dirichlet multinomial mixtures: generative models for microbial metagenomics},
  author={Holmes, Ian and Harris, Keith and Quince, Christopher},
  journal={PLoS ONE},
  volume={7},
  number={2},
  pages={e30126},
  year={2012},
  publisher={Public Library of Science San Francisco, USA}
}

@article{zhou2010mm,
  title={{MM} algorithms for some discrete multivariate distributions},
  author={Zhou, Hua and Lange, Kenneth},
  journal={Journal of Computational and Graphical Statistics},
  volume={19},
  number={3},
  pages={645--665},
  year={2010},
  publisher={Taylor \& Francis}
}

@article{li2015microbiome,
  title={Microbiome, metagenomics, and high-dimensional compositional data analysis},
  author={Li, Hongzhe},
  journal={Annual Review of Statistics and Its Application},
  volume={2},
  pages={73--94},
  year={2015},
  publisher={Annual Reviews}
}

@article{li2018conditional,
  title={Conditional regression based on a multivariate zero-inflated logistic-normal model for microbiome relative abundance data},
  author={Li, Zhigang and Lee, Katherine and Karagas, Margaret R and Madan, Juliette C and Hoen, Anne G and O’malley, A James and Li, Hongzhe},
  journal={Statistics in Biosciences},
  volume={10},
  pages={587--608},
  year={2018},
  publisher={Springer}
}

@article{mao2022dirichlet,
  title={Dirichlet-tree multinomial mixtures for clustering microbiome compositions},
  author={Mao, Jialiang and Ma, Li},
  journal={The Annals of Applied Statistics},
  volume={16},
  number={3},
  pages={1476},
  year={2022},
  publisher={NIH Public Access}
}

@article{wang2017dirichlet,
  title={A {Dirichlet}-tree multinomial regression model for associating dietary nutrients with gut microorganisms},
  author={Wang, Tao and Zhao, Hongyu},
  journal={Biometrics},
  volume={73},
  number={3},
  pages={792--801},
  year={2017},
  publisher={Wiley Online Library}
}

@article{xia2013logistic,
  title={A logistic normal multinomial regression model for microbiome compositional data analysis},
  author={Xia, Fan and Chen, Jun and Fung, Wing Kam and Li, Hongzhe},
  journal={Biometrics},
  volume={69},
  number={4},
  pages={1053--1063},
  year={2013},
  publisher={Wiley Online Library}
}

@article{tang2019zero,
  title={Zero-inflated generalized {Dirichlet} multinomial regression model for microbiome compositional data analysis},
  author={Tang, Zheng-Zheng and Chen, Guanhua},
  journal={Biostatistics},
  volume={20},
  number={4},
  pages={698--713},
  year={2019},
  publisher={Oxford University Press}
}

@article{zou2006adaptive,
  title={The adaptive lasso and its oracle properties},
  author={Zou, Hui},
  journal={Journal of the American Statistical Association},
  volume={101},
  number={476},
  pages={1418--1429},
  year={2006},
  publisher={Taylor \& Francis}
}

@article{boyd2011distributed,
  title={Distributed optimization and statistical learning via the alternating direction method of multipliers},
  author={Boyd, Stephen and Parikh, Neal and Chu, Eric and Peleato, Borja and Eckstein, Jonathan and others},
  journal={Foundations and Trends{\textregistered} in Machine Learning},
  volume={3},
  number={1},
  pages={1--122},
  year={2011},
  publisher={Now Publishers, Inc.}
}

@article{yee2010vgam,
  title={The VGAM package for categorical data analysis},
  author={Yee, Thomas W},
  journal={Journal of Statistical Software},
  volume={32},
  pages={1--34},
  year={2010}
}

@article{wadsworth2017integrative,
  title={An integrative {Bayesian} Dirichlet-multinomial regression model for the analysis of taxonomic abundances in microbiome data},
  author={Wadsworth, W Duncan and Argiento, Raffaele and Guindani, Michele and Galloway-Pena, Jessica and Shelburne, Samuel A and Vannucci, Marina},
  journal={BMC Bioinformatics},
  volume={18},
  number={1},
  pages={94},
  year={2017}
}

@article{chen2013variable,
  title={Variable selection for sparse {Dirichlet}-multinomial regression with an application to microbiome data analysis},
  author={Chen, Jun and Li, Hongzhe},
  journal={The Annals of Applied Statistics},
  volume={7},
  number={1},
  pages = {418--442},
  year={2013},
  publisher={NIH Public Access}
}

@phdthesis{neish2015cluster,
  title={Cluster analysis of microbiome data via mixtures of {Dirichlet}-multinomial regression models},
  author={Neish, Drew},
  year={2015},
  school={University of Guelph}
}

@article{zhou2012vs,
  title={{EM vs. MM}: A case study},
  author={Zhou, Hua and Zhang, Yiwen},
  journal={Computational Statistics \& Data Analysis},
  volume={56},
  number={12},
  pages={3909--3920},
  year={2012},
  publisher={Elsevier}
}

@article{zhou2019upper,
  title={The upper-airway microbiota and loss of asthma control among asthmatic children},
  author={Zhou, Yanjiao and Jackson, Daniel and Bacharier, Leonard B and Mauger, David and Boushey, Homer and Castro, Mario and Durack, Juliana and Huang, Yvonne and Lemanske, Robert F and Storch, Gregory A and others},
  journal={Nature Communications},
  volume={10},
  number={1},
  pages={1--10},
  year={2019},
  publisher={Nature Publishing Group}
}

@article{jackson2018quintupling,
  title={Quintupling inhaled glucocorticoids to prevent childhood asthma exacerbations},
  author={Jackson, Daniel J and Bacharier, Leonard B and Mauger, David T and Boehmer, Susan and Beigelman, Avraham and Chmiel, James F and Fitzpatrick, Anne M and Gaffin, Jonathan M and Morgan, Wayne J and Peters, Stephen P and others},
  journal={New England Journal of Medicine},
  volume={378},
  number={10},
  pages={891--901},
  year={2018},
  publisher={Mass Medical Soc}
}

@article{xiong2015changes,
  title={Changes in intestinal bacterial communities are closely associated with shrimp disease severity},
  author={Xiong, Jinbo and Wang, Kai and Wu, Jinfeng and Qiuqian, Linglin and Yang, Kunjie and Qian, Yunxia and Zhang, Demin},
  journal={Applied Microbiology and Biotechnology},
  volume={99},
  pages={6911--6919},
  year={2015},
  publisher={Springer}
}

@article{lee2017bacterial,
  title={Bacterial alterations in salivary microbiota and their association in oral cancer},
  author={Lee, Wei-Hsiang and Chen, Hui-Mei and Yang, Shun-Fa and Liang, Chao and Peng, Chih-Yu and Lin, Feng-Mao and Tsai, Lo-Lin and Wu, Buor-Chang and Hsin, Chung-Han and Chuang, Chun-Yi and others},
  journal={Scientific Reports},
  volume={7},
  number={1},
  pages={16540},
  year={2017},
  publisher={Nature Publishing Group UK London}
}

@article{she2008sparse,
author = {Yiyuan She},
title = {{Sparse regression with exact clustering}},
volume = {4},
journal = {Electronic Journal of Statistics},
number = {none},
publisher = {Institute of Mathematical Statistics and Bernoulli Society},
pages = {1055 -- 1096},
keywords = {clustering, Lasso, Sparsity, thresholding},
year = {2010},
doi = {10.1214/10-EJS578},
URL = {https://doi.org/10.1214/10-EJS578}
}

@article{fan2001variable,
  title={Variable selection via nonconcave penalized likelihood and its oracle properties},
  author={Fan, Jianqing and Li, Runze},
  journal={Journal of the American Statistical Association},
  volume={96},
  number={456},
  pages={1348--1360},
  year={2001},
  publisher={Taylor \& Francis}
}

@article{weiss2017normalization,
  title={Normalization and microbial differential abundance strategies depend upon data characteristics},
  author={Weiss, Sophie and Xu, Zhenjiang Zech and Peddada, Shyamal and Amir, Amnon and Bittinger, Kyle and Gonzalez, Antonio and Lozupone, Catherine and Zaneveld, Jesse R and V{\'a}zquez-Baeza, Yoshiki and Birmingham, Amanda and others},
  journal={Microbiome},
  volume={5},
  number={1},
  pages={27},
  year={2017},
  publisher={Springer}
}
\appendix
\renewcommand{\thetable}{S\arabic{table}}
\setcounter{table}{0}
\renewcommand{\thefigure}{S\arabic{figure}}
\setcounter{figure}{0}
\renewcommand{\theequation}{S\arabic{equation}}
\setcounter{equation}{0}
\renewcommand{\theassumption}{S\arabic{theorem}}
\setcounter{theorem}{0}
\section{Additional Details on DMMR Formulation}

\subsection{Dirichlet-Multinomial Distribution}\label{supp:1:DM}

In the following we provide the explicit form of the density function of the Dirichlet-multinomial distribution, which is used to model the observed count data in our proposed DMMR framework. The Dirichlet-multinomial distribution can be derived by marginalizing out the latent probability vector $\bp$ from a hierarchical model where $\bp$ follows a Dirichlet distribution and the observed counts $\bsm$ follow a multinomial distribution conditional on $\bp$. Specifically, we have
    \begin{equation*}
      \begin{aligned}
        \bsm \mid \bp &\sim \mbox{Multinomial}(M,\bp), \\
        \bp \mid (S = k) &\sim \ \mbox{Dir}( \balpha^{[k]} / \theta_k).
      \end{aligned}
    \end{equation*}
The density functions of the Dirichlet distribution and multinomial distribution are given by
    \begin{equation*}
      \begin{aligned}
        f_{\mathrm{Dir}} (\bp \mid S = k; \theta_k, \balpha^{[k]}) & = \frac{\Gamma(1/\theta_k)}{ \prod_{j=1}^p \Gamma(\alpha_j^{[k]}/\theta_k) } \prod_{j=1}^p p_j^{\alpha_j^{[k]}/\theta_k - 1}; \\
        f_{\mathrm{Multi}}(\bsm \mid \bp; M, \theta_k, \balpha^{[k]}) & = \frac{ \Gamma(M+1) }{ \prod_{j=1}^p \Gamma(m_j+1) }\prod_{j=1}^p p_j^{m_j}.
      \end{aligned}
    \end{equation*}
    Combining these two components, we obtain the density function of the Dirichlet-multinomial distribution as
    \begin{equation*}
      \begin{aligned}
        f_{\mathrm{DM}} (\bsm \mid S = k; M, \theta_k, \balpha^{[k]})
        &= \frac{ \Gamma(M+1) \Gamma(1/\theta_k) }{ \prod_{j=1}^p \Gamma(m_j+1) \Gamma(\alpha_j^{[k]}/\theta_k)}\int \prod_{j=1}^p p_j^{ m_j + \alpha_j^{[k]}/\theta_k - 1} \,d\bp\\
        &= \frac{ \Gamma(M+1) \Gamma(1/\theta_k) }{ \prod_{j=1}^p \Gamma(m_j+1) \prod_{j=1}^p \Gamma(\alpha_j^{[k]}/\theta_k)} \frac{\prod_{j=1}^p \Gamma(m_j + \alpha_j^{[k]}/\theta_k) }{\Gamma(M + 1/\theta_k)}\\
        &= \frac{\Gamma(M+1)}{\prod_{j=1}^p \Gamma(m_j +1)} \frac{\prod_{j=1}^p \prod_{l=0}^{m_j - 1}(\alpha_{j}^{[k]} + l\theta_k)}{\prod_{l=0}^{M-1}(1+l\theta_k)}.
      \end{aligned}
    \end{equation*}

\subsection{Further Explanation of the Constraints and Penalties}\label{supp:1:constraints}

The objective function of the DMMR model can be expressed as
\begin{equation}
\begin{aligned}
& \min_{\bTheta} \quad - \frac{1}{n} \sum_{i=1}^{n}\log f(\bsm_i;\bTheta) + \lambda_1 \mathcal{P}_{\gamma}(\bDelta^{[0]}) + \lambda_2 \sum_{k=1}^K \mathcal{P}_{\gamma}(\bDelta^{[k]}); \\
& \quad \qquad \text{s.t.} \quad \bDelta^{[0]}\1=\0,\ \bDelta^{[k]}\1 = \0, \ {\bbeta_{0}^{[k]}}\trans \1 = 0, \ k = 1, \ldots, K;
\sum_{k=1}^K \bDelta^{[k]} = \0,
\end{aligned}
\end{equation}
where the first term corresponds to the negative log-likelihood function with $f$ defined in \eqref{eq:dmmrdensity}, and the second and the third terms are the regularization terms for the common effects and the heterogeneous effects, respectively. Here $\mathcal{P}_{\gamma}(\cdot)$ is an adaptive sparsity-inducing penalty function with $\gamma$ controlling its adaptivity, and $\lambda_1$ and $\lambda_2$ are tuning parameters. The constraints are inherited from the symmetric link function and the reparameterization.

We provide further explanations of the constraints and penalties in the optimization problem above.

    \begin{itemize}
    \item
      These constraints arise from two procedures:
      \begin{enumerate}
      \item
        Given the clr-based transformation,
        $$
        \alpha_{j}^{[k]}=\frac{\exp{(\beta_{0j}^{[k]} + \bx\trans\bbeta^{[k]}_j)}}{\sum_{j'=1}^{p}\exp{(\beta_{0j'}^{[k]} + \bx\trans\bbeta^{[k]}_{j'})}}, \qquad j = 1, \ldots, p;\, k = 1,\dots, K.
        $$
        If we set $\widetilde \bbeta_0^{[k]} = \bbeta_0^{[k]} + \delta\1$, i.e. $\widetilde \bbeta_{0j}^{[k]} = \bbeta_{0j}^{[k]} + \delta$ for $j=1,\dots,p$, then $\alpha_j^{[k]}$ remains invariant to such a linear shift. Similarly, if we set $\widetilde \bbeta_{(l)}^{[k]} =\bbeta_{(l)}^{[k]} + \delta\1$, i.e. $\widetilde \bbeta_{lj}^{[k]} = \bbeta_{lj}^{[k]} + \delta$ for $j=1,\dots,p$, the resulting $\alpha_{j}^{[k]}$ is also invariant to this shift because
        $$
        \widetilde\alpha_{j}^{[k]}=\frac{\exp{(\beta_{0j}^{[k]} + \bx\trans\bbeta^{[k]}_j + x_j\delta)}}{\sum_{j'=1}^{p}\exp{(\beta_{0j'}^{[k]} + \bx\trans\bbeta^{[k]}_{j'} + x_j\delta)}} = \alpha_{j}^{[k]}.
        $$
        To ensure identifiability of the parameters $\{\bbeta_0^{[k]}, \bB^{[k]}\}_{k=1}^K$, we therefore impose the constraints ${\bbeta_0^{[k]}}\trans\1=0$ and ${\bbeta_{(l)}^{[k]}}\trans\1=0$ for $l=1,\dots, q$, i.e.
        $$
        {\bbeta_0^{[k]}}\trans\1=0\text{ and }\bB^{[k]}\1=\0.
        $$

      \item
        We reparameterize the coefficient for the $l$th covariate to distinguish the common-effect shared across all clusters from the cluster-specific effect as
        $$
        {\bbeta}^{[k]}_{(l)} = {\bdelta}^{[0]}_{(l)} + {\bdelta}^{[k]}_{(l)},  \quad k=1,\dots,K,
        $$
        where ${\bdelta}^{[0]}_{(l)}$ represents the common effect and ${\bdelta}^{[k]}_{(l)}$ captures the deviation of cluster $k$ from this common baseline.
        To ensure a unique ANOVA-type decomposition, we impose the centering constraint
        $$ \sum_{k=1}^{K}{\bdelta}^{[k]}_{(l)}=\0, $$
        which forces the cluster-specific effects to sum to zero across clusters.
        Moreover, to maintain the previous identifiability condition ${\bbeta_{(l)}^{[k]}}\trans\1=\0$, we require that
        $$\1\trans{\bdelta}^{[0]}_{(l)}=\1\trans{\bdelta}^{[k]}_{(l)}=0.$$

      \end{enumerate}
      These constitute the four constraints.

    \item We allowed different sparsity structures for the common effect and the cluster-specific effects. Biologically, some covariate-taxon associations are conserved across clusters, whereas others appear only within particular clusters. Accordingly, the sparsity pattern (which coefficients are nonzero) and the sparsity level (how many are nonzero) may differ between the common and cluster-specific effects. To accommodate these differences, we generally apply separate penalties and tuning parameters, and, when appropriate, use different norms (e.g., Frobenius-norm penalties for improving estimability, group penalties for identifying active covariates, and elementwise $\ell_1$ penalties for detecting covariate-taxon associations). We list several types of penalty functions below:
      \begin{enumerate}
        \item $ \mathcal{P}_{\gamma}(\bDelta^{[k]}) =  \|\bDelta^{[k]}\|_F$, a ridge-type (Frobenius-norm) penalty used to improve estimability and stabilize estimation.
        \item $ \mathcal{P}_{\gamma}(\bDelta^{[k]}) =  \sum_{l=1}^q  \|\bdelta^{[k]}_{(l)} \|$, a group-type penalty designed to identify covariates that have effects across taxa.
        \item $ \mathcal{P}_{\gamma}(\bDelta^{[k]}) =  \sum_{l=1}^q  \|\bdelta^{[k]}_{(l)} \|_1$, an elementwise $\ell_1$ penalty that induces entry-wise sparsity for covariate-taxon associations.
      \end{enumerate}
    \end{itemize}

\section{Details on Computation}\label{supp:computation}

\subsection{Computational Algorithm}

We develop an EM algorithm coupled with the Alternating Direction Method of Multipliers (ADMM) to solve the constrained regularization problem.

Let $\bz_i = (z_{i1}, z_{i2}, \dots, z_{iK})\trans$ denote the latent class membership of sample $i$, where $z_{ik} = 1$ if sample $i$ belongs to cluster $k$, and $z_{ik} = 0$ otherwise. Define $\bZ = (\bz_1\trans, \dots, \bz_n\trans)\trans$ as the membership matrix for all $n$ samples. Let $\bM = \{m_{ij}\}\in\real^{n\times p}$ be the data matrix of $n$ independent samples. The complete-data likelihood is given by
\begin{equation*}
  \begin{split}
    L(\bTheta; \bM, \bZ )
    =  \prod_{i=1}^n \prod_{k=1}^K \big\{ \pi_{k} f_{\mathrm{DM}}(\bsm_i;\theta_k, \bbeta_0^{[k]}, \bB^{[k]}) \big \} ^{z_{ik}}.
  \end{split}
\end{equation*}

\subsubsection{EM Algorithm}
In the E-step, the algorithm evaluates the conditional expectations of the latent cluster membership indicators based on the current parameter estimates as
\begin{equation*}
  \widehat{z}_{ik}^{(t+1)} = \mathbb{E} [z_{ik} \mid \bsm_i, \bTheta^{(t)}] = \frac{\pi_k^{(t)} f_{\mathrm{DM}}(\bsm_i;\, \theta_k^{(t)},{\bbeta_0^{[k]}}^{(t)},{\bB^{[k]}}^{(t)})}
  { \sum_{k=1}^K \pi_k^{(t)} f_{\mathrm{DM}}(\bsm_i;\,\theta_k^{(t)}, {\bbeta_0^{[k]}}^{(t)},{\bB^{[k]}}^{(t)})},
\end{equation*}
for the $(t)$-th iteration. Then we get
\begin{equation*}
  \begin{aligned}
    Q(\bTheta \mid \bTheta^{(t)})
    &= \mathbb{E}_{\bZ \mid \bTheta^{(t)},\bM} \bigl[\log L(\bTheta;\,\bM,\bZ) \bigl] = \log L(\bTheta;\,\bM,\widehat{\bZ}^{(t+1)}).
  \end{aligned}
\end{equation*}
To address the non-concavity of the $Q$-function with respect to $\bTheta$, we adopt a majorization-minimization (MM) approach to construct a surrogate function $Q_2(\bTheta \mid \bTheta^{(t)})$ (detailed in Supplement~\ref{supp:E_step}) that minorizes $Q(\bTheta \mid \bTheta^{(t)})$ and is more tractable for optimization \citep{zhou2010mm, zhou2012vs}.

In the M-step, we proceed to solve the following optimization problem,
\begin{equation} \label{eq:E_step}
  \begin{split}
    \min_{\bTheta} \quad - \frac{1}{n} Q_2(\bTheta \mid \bTheta^{(t)}) + \lambda_1 \mathcal{P}_{\gamma}(\bDelta^{[0]}) + \lambda_2 \sum_{k=1}^K \mathcal{P}_{\gamma}(\bDelta^{[k]}),
  \end{split}
\end{equation}
subject to the linear constraints in the objective function~\eqref{eq:optimization}. Let $\bbeta_{0} = ({\bbeta_{0}^{[1]}}\trans, \dots, {\bbeta_{0}^{[K]}}\trans)\trans$ be the stacked vector of intercepts across the $K$ components. Define
\begin{equation*}
  \bdelta = (\bDelta^{[0]}, \bDelta^{[1]}, \dots, \bDelta^{[K]})\trans
  \in \mathbb{R}^{(K+1)p \times q},
\end{equation*}
which collects all coefficient matrices across the $K$ components.
The optimization problem~\eqref{eq:E_step} is separable with respect to $\bpi$, $\btheta$, $\bbeta_0$, and $\bdelta$. There are closed-form solutions for $\bpi$ and $\btheta$. Maximizing $ Q_3(\bbeta_0, \bdelta \mid \bbeta_0^{(t)}, \bdelta^{(t)}) = Q_2(\bpi^{(t+1)},  \btheta^{(t+1)}, \bbeta_0, \bdelta \mid \bpi^{(t+1)},  \btheta^{(t+1)},  \bbeta_0^{(t)}, \bdelta^{(t)})$ with respect to $(\bbeta_0, \bdelta)$ amounts to solving a regularized optimization problem subject to linear constraints. This constrained optimization problem can be efficiently solved using an ADMM algorithm \citep{boyd2011distributed}, which is shown in Supplement~\ref{supp:M_step}.

\subsubsection{E-step}
\label{supp:E_step}

Let $\bTheta^{(t)}$ denote the parameter estimates at the $t$-th iteration. At the $(t+1)$-th iteration, we compute the conditional expectation of the complete-data log-likelihood
\begin{equation*}
\begin{split}
Q(\bTheta \mid \bTheta^{(t)})
&= \mathbb{E}_{\bZ \mid \bTheta^{(t)},\bM} \left\{ \ell(\bTheta;\, \bM, \bZ) \right\} \\
&= \sum_{i=1}^n \sum_{k=1}^K \widehat{z}_{ik}^{(t+1)} \left\{
\log \pi_k
+ \sum_{j=1}^p \sum_{l=0}^{m_{ij} - 1} \log\left(\alpha_{ij}^{[k]} + l\theta_k\right)
- \sum_{l=0}^{M_i - 1} \log\left(1 + l\theta_k\right)
\right\} + \text{C},
\end{split}
\end{equation*}
where $\widehat{z}_{ik}^{(t+1)} = \mathbb{E} [z_{ik} \mid \bsm_i, \bTheta^{(t)}] = \frac{\pi_k^{(t)} f_{\mbox{DM}}(\bsm_i;\, \theta_k^{(t)},{\bbeta_0^{[k]}}^{(t)},{\bB^{[k]}}^{(t)})}
{ \sum_{k=1}^K \pi_k^{(t)} f_{\mbox{DM}}(\bsm_i;\,\theta_k^{(t)}, {\bbeta_0^{[k]}}^{(t)},{\bB^{[k]}}^{(t)})}$ and , and $C$ represents terms constant with respect to $\bTheta$.

Following the reparameterization trick from \cite{zhou2010mm}, an equivalent form of the $Q$-function can be expressed as
\begin{equation*}
\begin{split}
 Q(\bTheta \mid \bTheta^{(t)})
 &= \sum_{i=1}^n \sum_{k=1}^K \widehat{z}_{ik}^{(t+1)} \log \pi_k + \sum_{j=1}^p \sum_{k=1}^K \sum_{l=0}^{m_{ij-1}} \sum_{i=1}^n \widehat{z}_{ik}^{(t+1)} \log(\alpha_{ij}^{[k]} + l\theta_k) \\
 & \quad - \sum_{k=1}^K \sum_{l=0}^{\max(M_i-1)} r_{lk}^{(t+1)} \log(1+l\theta_k)  + \text{C},
\end{split}
\end{equation*}
where $r_{lk}^{(t+1)}=\sum_{i=1}^n \widehat{z}_{ik}^{(t+1)}1\{M_{i} \geqslant l+1\}$.

To address the non-concavity of the $Q$-function with respect to $\bTheta$, we adopt a majorization-minimization (MM) approach, following the strategy proposed \cite{zhou2010mm, zhou2012vs}. Specifically,
by Jensen's inequality, we can minorize
\begin{equation*}
\begin{split}
 \log(\alpha_{ij}^{[k]} + l\theta_k) &\geqslant \frac{\alpha_{ij}^{[k](t)}}{\alpha_{ij}^{[k](t)} + l\theta_k^{(t)}}\log (\frac{\alpha_{ij}^{[k](t)} + l\theta_k^{(t)}}{\alpha_{ij}^{[k](t)}}\alpha_{ij}^{[k]}) + \frac{l\theta_k^{(t)}}{\alpha_{ij}^{[k](t)} + l\theta_k^{(t)}} \log ( \frac{\alpha_{ij}^{[k](t)} + l\theta_k^{(t)}}{l\theta_k^{(t)}}  l\theta_k ) \\
 &= \frac{\alpha_{ij}^{[k](t)}}{\alpha_{ij}^{[k](t)} + l\theta_k^{(t)}}\log \alpha_{ij}^{[k]} + \frac{l\theta_k^{(t)}}{\alpha_{ij}^{[k](t)} + l\theta_k^{(t)}} \log \theta_k  + \text{C}
\end{split}
\end{equation*}
and by the supporting hyperplane property of the convex function, we can minorize
\begin{equation*}
\begin{split}
 -\log(1+l\theta_k) \geqslant -\log(1+l\theta_k^{(t)}) - \frac{l(\theta_k - \theta_k^{(t)})}{1+l\theta_k^{(t)}} = - \frac{l}{1+l\theta_k^{(t)}}\theta_k + \text{C}.
\end{split}
\end{equation*}
Then we construct a surrogate function $Q_2(\bTheta \mid \bTheta^{(t)})$ that minorizes $Q(\bTheta \mid \bTheta^{(t)})$ and is more tractable for optimization as
\begin{equation*}
\begin{split}
Q_2(\bTheta \mid \bTheta^{(t)})
&= \sum_{i=1}^n \sum_{k=1}^K \widehat z_{ik}^{(t+1)} \log \pi_k +
\sum_{j=1}^p \sum_{k=1}^K  \sum_{i=1}^n  \widehat{z}_{ik}^{(t+1)} S_{ijk}^{(t+1)} \log(\alpha_{ij}^{[k]}) \\
& \quad + \sum_{j=1}^p \sum_{k=1}^K \sum_{i=1}^n  \widehat{z}_{ik}^{(t+1)} T_{ijk}^{(t+1)} \log \theta_k - \sum_{k=1}^K R_k^{(t+1)} \theta_k + \text{C}\\
&= \sum_{i=1}^n \sum_{k=1}^K \widehat{z}_{ik}^{(t+1)} \log \pi_k +
\sum_{j=1}^p \sum_{k=1}^K  \sum_{i=1}^n \widehat{z}_{ik}^{(t+1)} S_{ijk}^{(t+1)} (\beta_{0j}^{[k]} + \bx_i\trans\bbeta_{j}^{[k]}) \\
& \quad -  \sum_{k=1}^K  \sum_{i=1}^n  \widehat{z}_{ik}^{(t+1)} \bigl(\sum_{j=1}^p S_{ijk}^{(t+1)}\bigr) \log \bigl(\sum_{j=1}^p \exp(\beta_{0j}^{[k]} + \bx_i\trans\bbeta_{j}^{[k]})\bigr)  \\
& \quad +  \sum_{j=1}^p \sum_{k=1}^K  \sum_{i=1}^n  \widehat z_{ik}^{(t+1)} T_{ijk}^{(t+1)} \log \theta_k - \sum_{k=1}^K R_k^{(t+1)} \theta_k + \text{C},
\end{split}
\end{equation*}
where
$s_{ijkl}^{(t+1)} = {{\alpha_{ij}^{[k](t)}}}/{\bigl({\alpha_{ij}^{[k] (t)}}+ l\theta_k^{(t)}\bigr)}$,
$S_{ijk}^{(t+1)} = \sum_{l=0}^{m_{ij}-1} s_{ijkl}^{(t+1)}$,
$T_{ijk}^{(t+1)} = \sum_{l=0}^{m_{ij}-1} (1-s_{ijkl}^{(t+1)}) = m_{ij} - S_{ijk}^{(t+1)}$, $R_k^{(t+1)} = \sum_{l=0}^{\max(M_i - 1)} r_{lk}^{(t+1)} {l}/{\bigl(1+l \theta_k^{(t)}\bigr)}$.

We proceed to solve the following optimization problem in M-step.
\begin{equation}
\begin{split}
\min_{\bTheta} \quad - \frac{1}{n} Q_2(\bTheta \mid \bTheta^{(t)}) + \lambda_1 \mathcal{P}_{\gamma}(\bDelta^{[0]}) + \lambda_2 \sum_{k=1}^K \mathcal{P}_{\gamma}(\bDelta^{[k]}),
\end{split}
\end{equation}
subject to the linear constraints in \eqref{eq:optimization}.

\subsubsection{M-step}
\label{supp:M_step}

The objective of the M-step is to update $\bTheta^{(t+1)}$ by minimizing the expression in \eqref{eq:E_step}. Let $\bbeta_{0} = ({\bbeta_{(0)}^{[1]}}\trans, \dots, {\bbeta_{(0)}^{[K]}}\trans)$ denote the stacked vector of intercepts across the $K$ components, and let  $\bdelta = (\bDelta^{[0]}, \bDelta^{[1]}, \dots, \bDelta^{[K]})\trans \in \mathbb{R}^{(K+1)p \times q}$ collect all coefficient matrices across the $K$ components.
The optimization problem is separable with respect to $\bpi$, $\btheta$, $\bbeta_0$, and $\bdelta$.

The update for $\bpi$ is obtained by solving the following constrained optimization problem: \begin{equation*}
\begin{split}
\bpi^{(t+1)} = \arg\min_{\bpi}
\left\{ - \sum_{i=1}^n \sum_{k=1}^K \widehat{z}_{ik}^{(t+1)} \log \pi_k
\right\}, \quad \text{s.t. } \sum_{k=1}^K \pi_k = 1.
\end{split}
\end{equation*}
The closed-form solution is given by $\pi_k^{(t+1)} = \frac{1}{n} \sum_{i=1}^n \widehat{z}_{ik}^{(t+1)}$, for $k = 1, \dots, K$.

The update for $\btheta$ is obtained by solving the following optimization problem
\begin{equation*}
\begin{split}
\btheta^{(t+1)} = \arg\min_{\btheta} \left\{ - \sum_{j=1}^p \sum_{k=1}^K \sum_{i=1}^n \widehat{z}_{ik}^{(t+1)} T_{ijk}^{(t+1)} \log \theta_k + \sum_{k=1}^K R_k^{(t+1)} \theta_k \right\}, \quad \text{s.t. } \theta_k > 0.
\end{split}
\end{equation*}
The resulting closed-form update is given by $\theta_k^{(t+1)}= \bigl(\sum_{j=1}^p \sum_{i=1}^n  \widehat z_{ik}^{(t+1)} T_{ijk}^{(t+1)}\bigr)/{R_k^{(t+1)}}$, for $k=1, \dots, K$.

The update for ${\bbeta_0, \bdelta}$ is obtained by solving the following regularized optimization problem
\begin{equation}  \label{eq:obj_delta}
 \bbeta_0^{(t+1)}, \bdelta^{(t+1)}
 = \arg\min_{\bbeta_0, \bdelta} - \frac{1}{n} Q_3(\bbeta_0, \bdelta \mid \bbeta_0^{(t)}, \bdelta^{(t)})
  + \lambda_1 \mathcal{P}_{\gamma}(\bDelta^{[0]}) + \lambda_2 \sum_{k=1}^K \mathcal{P}_{\gamma}(\bDelta^{[k]}),
\end{equation}
subject to the linear constraints in \eqref{eq:optimization} and
\begin{equation*}
\begin{split}
Q_3(\bbeta_0, \bdelta \mid \bbeta_0^{(t)}, \bdelta^{(t)})
&=  \sum_{j=1}^p \sum_{k=1}^K  \sum_{i=1}^n  \widehat{z}_{ik}^{(t+1)} S_{ijk}^{(t+1)} (\beta_{0j}^{[k]} + \bx_i\trans\bbeta_{j}^{[k]}) \\
& \qquad - \sum_{k=1}^K  \sum_{i=1}^n  \widehat{z}_{ik}^{(t+1)} \bigl(\sum_{j=1}^p S_{ijk}^{(t+1)}\bigr) \log
\bigl(\sum_{j=1}^p \exp(\beta_{0j}^{[k]} + \bx_i\trans\bbeta_{j}^{[k]})\bigr),
\end{split}
\end{equation*}
where the component-specific coefficients $\bbeta_j^{[k]}$ are functions of $\bdelta$ through the reparameterization given in \eqref{eq:reparam}.

The constrained optimization problem in \eqref{eq:obj_delta} can be efficiently solved using the Alternating Direction Method of Multipliers (ADMM) algorithm \citep{boyd2011distributed}. We introduce the augmented parameter $\widetilde\bdelta = (\bdelta_0, \bdelta) \in \mathbb{R}^{p(K+1) \times (q+1)}$, where $\bdelta_0 = \big( {\bdelta_0^{[0]}}\trans, {\bdelta_0^{[1]}}\trans, \dots, {\bdelta_0^{[K]}}\trans \big)\trans$, and each $\bdelta_0^{[k]}$ for $k = 0, 1, \dots, K$ is defined such that $\bbeta_0^{[k]} = \bdelta_0^{[0]} + \bdelta_0^{[k]}$. We also introduce the dual variable $\widetilde\bb = \widetilde\bdelta$, and denote $\bb^{[k]} = \bDelta^{[k]} \in \mathbb{R}^{p \times q}$ for $k = 0, 1, \dots, K$, with $\bb_{(l)}^{[k]} = \bdelta_{(l)}^{[k]} \in \mathbb{R}^{p}$ for $l = 1, \dots, q$.

With this reparameterization, the optimization problem in \eqref{eq:obj_delta} can be equivalently reformulated to facilitate the application of ADMM as follows
\begin{equation} \label{eq:M_step}
\begin{split}
& \min_{\widetilde\bdelta,\widetilde\bb} \quad - \frac{1}{n} Q_3(\widetilde\bdelta \mid \widetilde\bdelta^{(t)}) + \lambda_1 \mathcal{P}_{\gamma}(\bb^{[0]}) + \lambda_2 \sum_{k=1}^K \mathcal{P}_{\gamma}(\bb^{[k]}) \\
& \quad \qquad s.t. \ \ \widetilde\bdelta = \widetilde\bb, \\
& \quad \qquad \quad \quad \bDelta^{[0]}\1=\0,\  \bDelta^{[k]}\1 = \0, \ k = 1, \ldots, K, \\
& \quad \qquad \qquad \sum_{k=1}^K \bDelta^{[k]} = \0, \\
& \quad \qquad \qquad  {\bbeta_0^{[k]}}\trans \1 = 0, \ k = 0, \ldots, K.
\end{split}
\end{equation}
The first two linear constraints can be combined and expressed in matrix form as
\begin{equation*}
\begin{split}
\begin{pmatrix}
 \bI_{(K+1)p} \\
(\1_K,\bI_{K}) \otimes \1_p\trans
\end{pmatrix}
\widetilde\bdelta +
\begin{pmatrix}
-\bI_{(K+1)p}  \\
\boldsymbol{0}_{K\times(K+1)p}
\end{pmatrix}
\widetilde\bb
=
\boldsymbol{0}_{(pK+p+K)\times (q+1)},
\end{split}
\end{equation*}
which we denote compactly as $\bA_1 \widetilde\bdelta + \bA_2 \widetilde\bb  = \0$.

The optimization problem in \eqref{eq:M_step} is solved via an ADMM algorithm, which iteratively updates the primal and dual variables until convergence. Specifically, at the $\{s+1\}$-th iteration, the updates are given by the following steps
\begin{subequations}
\label{eq:ADMM}
\begin{align}
& \widetilde\bdelta^{\{s+1\}}  = \arg\min_{\widetilde\bdelta} - \frac{1}{n} Q_3(\widetilde\bdelta \mid \widetilde\bdelta^{\{s\}}) + \frac{\rho}{2} \|\bA_1  \widetilde\bdelta + \bA_2 \widetilde\bb^{\{s\}}  + \u^{\{s\}}  \|_F^2 \label{eq:ADMM_delta} \\
& \qquad \qquad \mbox{s.t. } \sum_{k=1}^K \bdelta^{[k]} = \0_{p\times (q+1)} , \notag \\
& \quad \qquad \qquad \quad  {\bbeta_0^{[k]}}\trans \1 = 0, \ k = 0, \ldots, K. \notag \\
& \widetilde\bb^{\{s+1\}} = \arg\min_{\widetilde\bb} \   \frac{\rho}{2} \|\bA_1  \widetilde\bdelta^{\{s+1\}} + \bA_2 \widetilde\bb + \u^{\{s\}}  \|_F^2 + \lambda_1 \mathcal{P}_{\gamma}(\bb^{[0]}) + \lambda_2 \sum_{k=1}^K \mathcal{P}_{\gamma}(\bb^{[k]}) \label{eq:ADMM_b} \\
& \u^{\{s+1\}}  = \u^{\{s\}}  + \bA_1  \widetilde\bdelta^{\{s+1\}} + \bA_2 \widetilde\bb^{\{s+1\}},   \notag
\end{align}
\end{subequations}
where $\| \cdot \|_F$ is the Frobenius norm.

\subsubsection{Details of ADMM Algorithm}

We could similarly denote $\widetilde\bbeta = (\bbeta_0, \bbeta)\in \real^{pK \times (q+1)}$, where $\bbeta = (\bbeta^{[1]}, \dots, \bbeta^{[K]})\trans \in \real^{pK \times q}$ and $\bbeta_0 = ({\bbeta_0^{[1]}}\trans, \dots, {\bbeta_0^{[K]}}\trans)\trans \in \real^{pK}$. Then $\widetilde\bdelta$ can be written as a linear function of $\widetilde\bbeta$ as
\begin{equation*}
\begin{split}
\widetilde\bdelta = \bH \widetilde\bbeta, \ \bH =
\begin{pmatrix}
\frac{1}{K} ( \1_{K}\trans \otimes \bI_p )\\
\boldsymbol{I}_{pK} - \frac{1}{K} ( \bJ_K \otimes \bI_p)
\end{pmatrix} \in \real^{p(K+1)\times pK},
\end{split}
\end{equation*}
where $\bJ_K$ is $K\times K$ matrix of ones.
Thus, solving for $\widetilde\bdelta$ in \eqref{eq:ADMM_delta} is equivalent to solving for $\widetilde\bbeta$ in the following optimization problem
\begin{equation}
\begin{split} \label{eq:admm_beta}
& \widetilde\bbeta^{\{s+1\}}
= \arg\min_{\widetilde\bbeta} - \frac{1}{n} \Big\{ \sum_{j=1}^p \sum_{k=1}^K  \sum_{i=1}^n  \widehat z_{ik}^{(t+1)} S_{ijk}^{(t+1)} (\beta^{[k]}_{0j} + \bx_i\trans\bbeta^{[k]}_j) \\
& \qquad \qquad \qquad \qquad -  \sum_{k=1}^K  \sum_{i=1}^n  \widehat z_{ik}^{(t+1)} \bigl(\sum_{j=1}^p S_{ijk}^{(t+1)}\bigr) \log \bigl(\sum_{j=1}^p \exp(\beta^{[k]}_{0j} + \bx_i\trans\bbeta^{[k]}_j)\bigr) \Big\} \\
& \qquad \qquad \qquad \qquad + \frac{\rho}{2} \|\bA_1 \bH \widetilde\bbeta + \bA_2 \widetilde\bb^{\{s\}}  + \u^{\{s\}} \|_F^2
\end{split}
\end{equation}
The unconstrained problem~\eqref{eq:admm_beta} is solved with a limited-memory BFGS (L-BFGS) algorithm \citep{liu1989limited}, implemented in C++ via \texttt{Rcpp}.

Next, we formulate the optimization problem with respect to $\widetilde\bb$ in \eqref{eq:ADMM_b} as
\begin{equation*}
\begin{split}
\widetilde\bb^{\{s+1\}}
& = \arg\min_{\widetilde\bb} \frac{\rho}{2} \|\bA_1  \widetilde\bdelta^{\{s+1\}} + \bA_2 \widetilde\bb + \u^{\{s\}} \|_F^2  + \lambda_1 \mathcal{P}_{\gamma}(\bb^{[0]}) + \lambda_2 \sum_{k=1}^K \mathcal{P}_{\gamma}(\bb^{[k]}) \\
& = \arg\min_{\widetilde\bb}   \frac{1}{2} \sum_{l=1}^q \|\bA_1  \widetilde\bdelta^{\{s+1\}}_{(l)} + \bA_2 \widetilde\bb_{(l)} + \u^{\{s\}}_{(l)} \|^2  + \frac{\lambda_1}{\rho} \sum_{l=1}^q \mathcal{P}_{\gamma}(\bb^{[0]}_{(l)}) + \frac{\lambda_2}{\rho} \sum_{k=1}^K\sum_{l=1}^{q} \mathcal{P}_{\gamma}(\bb^{[k]}_{(l)}),
\end{split}
\end{equation*}
where $\widetilde\bdelta_{(l)}^{\{s+1\}}$, $\widetilde\bb_{(l)}$, and $\u_{(l)}^{\{s\}}$ denote the $(l+1)$-th columns of the matrices $\widetilde\bdelta^{\{s+1\}}$, $\widetilde\bb$, and $\u^{\{s\}}$, respectively, for $l = 1, \dots, q$. This optimization problem is separable across the intercept term $\bb_0$ and each column $\widetilde\bb_{(l)}$.
\begin{itemize}
\item Intercept term ($\bb_0$) The optimization with respect to the
  intercept component $\bb_0$ reduces to
  \begin{equation*}
    \bb_0^{\{s+1\}} = \arg\min_{\bb_0} \sum_{k=0}^K \frac{1}{2} \left| \bb_0^{[k]} - \left( {\bdelta_0^{[k]}}^{\{s+1\}} + \widehat{\u}_0^{[k]\{s\}} \right) \right|^2, \end{equation*} which is separable across $k$, yielding the closed-form update
  $\bb_0^{[k]\{s+1\}} = \bdelta_0^{[k]\{s+1\}} + \widehat{\u}_0^{[k]\{s\}}, \quad k = 0, \dots, K$.
\item Coefficient vectors ($\bb_{(l)}$, $l = 1, \dots, q$) The update
  for each $\bb_{(l)}$ is given by
  \begin{equation*}
    \begin{split}
      \bb_{(l)}^{\{s+1\}} & = \arg\min_{\bb_{(l)}}   \frac{1}{2}  \|\bA_1  \widetilde\bdelta^{\{s+1\}}_{(l)} + \bA_2 \bb_{(l)} + \u^{\{s\}}_{(l)} \|^2 + \frac{\lambda_1}{\rho}  \mathcal{P}_{\gamma}(\bb^{[0]}_{(l)})  + \frac{\lambda_2}{\rho}    \sum_{k=1}^K \mathcal{P}_{\gamma}(\bb^{[k]}_{(l)})  \\
                          &=  \arg\min_{\bb_{(l)}}   \frac{1}{2}  \| \bb_{(l)} - (\widetilde\bdelta^{\{s+1\}}_{(l)}    + \widehat{\u}^{\{s\}}_{(l)} ) \|^2 + \frac{\lambda_1}{\rho}  \mathcal{P}_{\gamma}(\bb^{[0]}_{(l)}) + \frac{\lambda_2}{\rho}  \sum_{k=1}^K \mathcal{P}_{\gamma}(\bb^{[k]}_{(l)}) ,
    \end{split}
  \end{equation*}
  where $\widehat{\u}^{\{s\}}_{(l)}$ is the sub-vector containing the
  first $pK$ elements of vector $\u^{\{s\}}_{(l)}$. This objective is
  separable across the group-specific components $\bb_{(l)}^{[k]}$ for
  $k = 0, \dots, K$. When $\mathcal{P}_{\gamma}(\cdot)$ is the adaptive group
  $\ell_2$ penalty, i.e.,
  $\mathcal{P}_{\gamma}(\bb_{(l)}^{[k]}) = w_{(l)}^{[k]} \|\bb_{(l)}^{[k]}\|$, this subproblem admits a closed-form solution
  via group soft-thresholding:
    $${\bb_{(l)}^{[k]}}^{\{s+1\}} =  S({\bdelta_{(l)}^{[k]}}^{\{s+1\}}    + (\widehat{\u}_{(l)}^{[k]})^{\{s\}} ,\; \frac{\lambda_2 w_{(l)}^{[k]}}{\rho}),$$
    for $k = 1,\dots, K$ and
    $${\bb_{(l)}^{[0]}}^{\{s+1\}} =  S({\bdelta_{(l)}^{[0]}}^{\{s+1\}}    + (\widehat{\u}_{(l)}^{[0]})^{\{s\}} ,\; \frac{\lambda_1 w_{(l)}^{[0]}}{\rho})$$
    for $k = 0$, where
    $S(\bz, \lambda) = \left( 1 - \frac{\lambda}{|\bz|} \right)_+ \bz$
    is the group soft-thresholding operator.
  \end{itemize}

\subsubsection{Algorithm}\label{supp:algorithm}

The computational procedure is outlined in Algorithm~\ref{alg:EM} below.
\begin{algorithm}
\caption{EM algorithm for the DMMR model}\label{alg:EM}
\begin{algorithmic}
\State \textbf{Initialization}: $\bTheta^{(0)}=\{\bpi^{(0)}, \btheta^{(0)}, \bbeta_{0}^{(0)}, \bdelta^{(0)}\}$; tolerance for stopping condition $\mathrm{tol}_{\mathrm{EM}}$ (e.g., $10^{-4}$), maximum number of iterations $\mathrm{maxiter}_{\mathrm{EM}}$ (e.g., $100$). Set iteration number $t \gets 0$.
\Repeat \ when $t < \mathrm{maxiter}_{\mathrm{EM}}$
\State \textbf{E-step}:
\State \qquad $\widehat\bZ^{(t+1)} \gets \arg\max_{\bZ} \bigl\{\log L(\bTheta^{(t)};\, \bM, \bZ) \bigr\}$ with closed form.
\State \qquad Update the surrogate $Q_2$ function.
\State \textbf{M-step}:
\State \qquad $\bpi^{(t+1)} \gets \arg\max_{\bpi} Q_2(\bpi, \btheta^{(t)}, \bbeta_{0}^{(t)}, \bdelta^{(t)}\mid \bpi^{(t)}, \btheta^{(t)}, \bbeta_{0}^{(t)}, \bdelta^{(t)})$ with closed form.
\State \qquad $\btheta^{(t+1)} \gets \arg\max_{\btheta} Q_2(\bpi^{(t+1)}, \btheta, \bbeta_{0}^{(t)}, \bdelta^{(t)}\mid \bpi^{(t+1)}, \btheta^{(t)}, \bbeta_{0}^{(t)}, \bdelta^{(t)})$ with closed form.
\State \qquad $(\bbeta_{0}^{(t+1)}$, $\bdelta^{(t+1)}) \gets
\arg\max_{\bbeta_{0}, \bdelta}
\bigl\{ Q_3(\bbeta_0, \bdelta \mid \bbeta_0^{(t)}, \bdelta^{(t)}) + \lambda_1 \mathcal{P}_{\gamma}(\bDelta^{[0]}) + \lambda_2 \sum_{k=1}^K \mathcal{P}_{\gamma}(\bDelta^{[k]})\bigr\}$ with the ADMM algorithm in Supplement~\ref{supp:M_step}.
\State $t \gets t+1$
\Until{ $\|\mathrm{vec}(\bTheta^{(t+1)}) - \mathrm{vec}(\bTheta^{(t)})\|/(\|  \mathrm{vec}(\bTheta^{(t)}) \|+ 10^{-14}) \leqslant  \mathrm{tol}_{\mathrm{EM}}$. }
\end{algorithmic}
\end{algorithm}

\subsection{Solution Path \& Model Selection}
\label{supp:selection}

We employ the following algorithm to estimate $\lambda_{\max}$ for which the estimated $\bdelta$ becomes a zero-matrix, indicating no covariate effects.

\begin{algorithm}[!htb]
\caption{Procedure to find $\lambda_{\max}$ of the tuning parameter sequence}
\begin{algorithmic}
\State \textbf{Initialization}: initial testing value of $\lambda_{\max}$: $\lambda = 1$, lower and upper bounds of $\lambda_{\max}$: $\lambda_l = 0, \lambda_u = 100$, number of EM iterations: $a = 10$, number of bisection steps for the search procedure: $b = 10$.
\State $i \gets 0$
\Repeat
\State Run the EM algorithm with tuning parameter $\lambda$ for $a$ iterations.
    \If{ $\bdelta^{[k]} = \boldsymbol{0}$ for all $k = 1, \dots, K$ during iterations}
        \State Set $\lambda_u \gets \lambda$
    \Else
        \State Set $\lambda_l \gets \lambda$
    \EndIf
\State Update $\lambda \gets (\lambda_l + \lambda_u)/2$
\State $i \gets i+1$
\Until{ $i = b$ }
\State $\lambda_{\max} \gets \lambda$.
\end{algorithmic}
\end{algorithm}

To select the optimal number of mixture components and tuning parameters, there are several model selection criteria, including the Akaike Information Criterion (AIC), Bayesian Information Criterion (BIC), and Generalized Information Criterion (GIC). These criteria are evaluated over a grid of candidate values for $\{K, \lambda\}$ and rely on the computation of model complexity, which is quantified by the degrees of freedom (df) in the proposed DMMR model with heterogeneous pursuit.

We define the number of active covariates for the common-effect component $\bDelta^{[0]}$ and for each cluster-specific effect $\bDelta^{[k]}$. Due to the constraint $\sum_{k=1}^K \bDelta^{[k]} =\0$, the degrees of freedom are reduced accordingly. The resulting total degrees of freedom associated with $\bdelta$ as follows:
\begin{equation*}
\begin{aligned}
    S_k &= \{\text{indices of non-zero rows in $\bDelta^{[k]}$}\} \\
    &= \{l: \bdelta_{(l)}^{[k]}\ne \0\},\quad k = 0,1,\dots,K;\\
    s_k &=\# S_k,\quad k = 0,1,\dots,K;\\
    s_{c} &= \# \bigl(\cup_{k=1}^K S_k\bigr);\\
    \mathrm{df}(\bdelta) &= \bigr(\sum_{k=0}^q s_k - s_c\bigr) \times (p-1)
\end{aligned}
\end{equation*}

The total degrees of freedom for the DMMR model with heterogeneity pursuit, incorporating the contributions from the mixture proportions, over-dispersion parameters, intercepts, and covariate effects, is given by
\begin{equation*}
\begin{aligned}
\mathrm{df} &= \mathrm{df}(\bpi) + \mathrm{df}(\btheta) + \mathrm{df}(\bbeta_{0}) + \mathrm{df}(\bdelta)\\
&= 2K-1 + \bigr(K + \sum_{k=0}^K s_k - s_c\bigr) \times (p-1).
\end{aligned}
\end{equation*}

Based on this df estimate, the selection criteria are defined as
\begin{equation*}
    -2\times \log L(\bTheta;\, \bM) + a_n \times \mathrm{df},
\end{equation*}
where the term $a_n$ depends on the criterion: $a_n=\log(n)$ for BIC, and $a_n=\log \bigl(\log (n)\bigr) \times \log(\max \{n, \mathrm{df}_{\max}\})$ for GIC, with $\mathrm{df}_{\max} = 2K - 1 + K(q+1)(p-1)$.

For the adaptively penalized estimator (DMMR-AP), the initial estimates used to construct the adaptive weights are obtained from the DMMR solution path evaluated at $\lambda = 0.001$, which serves as a near-unpenalized solution. Because fitting the model at exactly $\lambda = 0$ can be numerically unstable, especially in higher-dimensional settings, this approach leverages the warm-start structure of the solution path to produce a stable initial estimator.
With this initial estimator $\widehat{\bdelta}_{(l)}^{[k](0)}$, the adaptive weights are computed as $w^{[k]}_{(l)} = 1/(\|\widehat{\bdelta}_{(l)}^{[k](0)}\| + 10^{-6})$ with $\gamma=1$. The solution path is then recomputed using these adaptive weights, and model selection proceeds as described below.

Throughout the paper, unless otherwise noted, for the model-based approaches (DMM, DMMR-NP, DMMR-P, and DMMR-AP), the number of clusters was selected using BIC. For K-Means and HC, $K$ was determined by maximizing the average silhouette width, which measures cluster cohesion and separation. Since the silhouette width is not defined for $K=1$, we adopted a threshold-based rule: when the silhouette widths for both $K=2$ and $K=3$ were below $0.15$, we selected $K=1$ to indicate no meaningful clustering structure.

\FloatBarrier

\clearpage
\section{Proof}
\label{supp:proof}

\begin{assumption}[Regularity condition]
\label{supp:assumption}
Let $\bTheta = \{\bpi, \btheta, \bbeta_{0}^{[1]},\ldots,\bbeta_{0}^{[K]}, \bB^{[1]},\ldots, \bB^{[K]}\}$ to collect all unknown parameters, and the parameter space is given by
\begin{equation*}
  \Omega=\Pi \times \mathbb R_{+} \times \Xi,
\end{equation*}
where $\Pi=\{(p_1,\dots, p_K)\trans: 0<p_k<1,\,\sum_{k=1}^K p_k=1\}$. Use $\bTheta_j$ to denote the $k$-th entry of vectorized $\bTheta$.

We assume similar conditions as in work \citep{fan2001variable}, but with necessary modification due to natural constraints on the parameters for the model to be identifiable. Let $\bV_i = (\bsm_i,\bx_i)$ be the $i$th observation for $i = 1,\dots,n$.
\begin{enumerate}
\item The observations $\bV_i$ are independent and identically distributed with probability density function $f(\bV;\bTheta)$ with respect to some measure $\mu$. $f(\bV;\bTheta)$ has a common support and the model is identifiable. Furthermore, the first and second logarithmic derivatives of $f$ satisfying the equations
  \begin{equation*}
    \mathrm{E}_{\bTheta}\biggr[\frac{\partial \log f(\bV;\bTheta)}{\partial \bTheta}\biggr] = \0
  \end{equation*} and
  \begin{align*}
    \bI_{jk}(\bTheta)
    &=
      \mathrm{E}_{\bTheta}\biggr[\frac{\partial \log f(\bV;\bTheta)}{\partial \bTheta_j } \frac{\partial \log f(\bV;\bTheta)}{\partial \bTheta_k}\biggr] \\
    &=
      \mathrm{E}_{\bTheta}\biggr[-\frac{\partial^2 \log f(\bV;\bTheta)}{\partial \bTheta_j \bTheta_k}\biggr]
  \end{align*}
\item The Fisher information matrix
  \begin{equation*}
    \bI(\bTheta)=
    \mathrm{E}_{\bTheta}\biggr[\biggl(\frac{\partial \log f(\bV;\bTheta)}{\partial \bTheta }\biggr) \biggl(\frac{\partial \log f(\bV;\bTheta)}{\partial \bTheta}\biggr)\trans\biggr]
  \end{equation*}
  is finite and positive definite at the true parameter vector $\bTheta = \bTheta^\star$ with respect to the constraints.
\item There exists an open subset $\omega$ of $\Omega$ that contains the true parameter $\bTheta^\star$ such that for almost all $\bV$, the density function $f(\bV;\bTheta)$ admits all third derivatives. Furthermore, there exists function $M_{jkl}(\cdot)$ such that
  \begin{equation*}
    \biggr|\frac{\partial^3}{\partial \bTheta_j\bTheta_k\bTheta_l }
    \log f(\bV;\bTheta)\biggr| \le
    M_{jkl}(\bTheta) \text{ for all } \bTheta \in \omega.
  \end{equation*}
\end{enumerate}
\end{assumption}

\begin{proof}[Proof of Theorem~\ref{thm:group_lasso_ini}]

Let
\begin{align*}
    T_g &=
    \begin{pmatrix}
    \0_{K(p+2)}\trans,&
    \frac{1}{K}(\bI_p\otimes \be_{l,q}\trans ) (\1_K\trans \otimes \bI_{qp})
    \end{pmatrix},
    \quad \text{ for }g \in \{1,\dots,q\};\\
    T_g &=
    \begin{pmatrix}
    \0_{K(p+2)}\trans,&
    (\bI_p\otimes \be_{l,q}\trans )\left((\be_{k,K}\trans-\frac{1}{K}\1_K\trans) \otimes \bI_{qp}\right)
    \end{pmatrix},
    \quad \text{ for } g \in \{q+1,\dots, (K+1)q\},
\end{align*}
where $\be_{l,q}$ is the length-$q$ vector of zeros with a $1$ in the $l$-th entry, and $\be_{k,K}$ is defined similarly, then $T_g\bTheta = \delta_{(l)}^{[k]}$ where $g = qk + l$ corresponds to the quotient $k$ and remainder $l$ of the division by $q$.

Let
\begin{align*}
    \bD =
    \begin{pmatrix}
    \1_K\trans,&\0,&\0\\
    \0, & \bI_K\otimes\1_p\trans, &\0\\
    \0, & \0, &(\bI_K\otimes\1_p\trans)\otimes \bI_q\\
    \end{pmatrix}
\end{align*}
and $\bd = (1,\0_{K(q+1)}\trans)\trans$, so that $\bD\bTheta = \bd$ collects all linear constraints.

Define
\begin{equation*}
  l_\lambda (\bTheta) = l(\bTheta)-n\lambda \sum_{g=1}^G \|\bT_g \bTheta\|_2.
\end{equation*}
Let $\bu=\sqrt n (\bTheta-\bTheta^\star)$. Define
\begin{equation*}
  D_n(\bu)=l_\lambda(\bTheta^\star+n^{-1/2}\bu)-l_\lambda(\bTheta^\star).
\end{equation*}

In order to show $\widehat\bTheta$ is root $n$-consistent, we need to show for any $\varepsilon>0$, there exists a sufficiently large constant $c$ such that
\begin{equation*}
  P\bigg(\sup_{\|\bu\|=c,\; \bD\bu=\0} D_n(\bu) <0\bigg)\ge 1-\varepsilon
\end{equation*}

\begin{equation*}
\begin{split}
D_n(\bu)
    &=l(\bTheta^\star+n^{-1/2}\bu)-l(\bTheta^\star)\\
    &-n\lambda\bigg\{
    \sum_{g=1}^G \|\bT_g (\bTheta^\star+n^{-1/2} \bu)\|_2
    -\sum_{g=1}^G \|\bT_g  \bTheta^\star\|_2
    \bigg\}\\
    &\le l(\bTheta^\star+n^{-1/2}\bu)-l(\bTheta^\star)\\
    &-\underbrace{n\lambda\bigg\{
    \sum_{g=1}^G \big[\|\bT_g (\bTheta^\star + n^{-1/2} \bu)\|_2
    - \|\bT_g  \bTheta^\star\|_2
    \big]I(\|\bT_g  \bTheta^\star\|_2\ne 0)
    \bigg\} }_{G_n(\bu)}.
\end{split}
\end{equation*}

\begin{equation*}
\begin{split}
|G_n(\bu)|
    &=n\lambda\bigg\{
    \sum_{g=1}^G \big[\|\bT_g (\bTheta^\star + n^{-1/2} \bu)\|_2
    - \|\bT_g  \bTheta^\star \|_2
    \big]I(\|\bT_g  \bTheta^\star\|_2\ne 0)
    \bigg\}\\
    &\le \lambda \sqrt n \sum_{g=1}^G \|\bT_g  \bu\|_2
    I(\|\bT_g  \bTheta^\star\|_2\ne 0) \\
    &\sim O_p(\lambda\sqrt  n \|\bu\|_2)
\end{split}
\end{equation*}
As long as $\lambda =O(n^{-1/2})$, $D_n(\bu)=-\frac{1}{2}\bu\trans \bI(\bTheta^\star)\bu+R_n(\bu)$ and $R_n(\bu)=o_p(\|\bu\|^2)$. With similar arguments as before, $\widehat \bTheta$ achieves $\sqrt n$-consistency.
\end{proof}

\begin{proof}[Proof of Corollary~\ref{thm:group_lasso_adp}]

Define
\begin{equation*}
  l_\lambda^\gamma (\bTheta) = l(\bTheta)-n \sum_{g=1}^G \lambda_g\|\bT_g \bTheta\|_2,
\end{equation*}
where $\lambda_g=\lambda \|\bT_g \widehat \bTheta^{(0)}\|_2^{-\gamma}$, and $\widehat\Delta^{(0)}$ is any non-adaptive estimator with $\sqrt n$-consistency.
For notation convenience, let $\bT_g \bTheta^\star\ne\0$ for $g=1,\dots,g_0$ while $\bT_g \bTheta^\star=\0$ for $g>g_0$ without loss of generality. Define $a_n=\max\{ \lambda_g, g\le g_0\}$ and $b_n = \min\{\lambda_g, g > g_0\}$, then $a_n=O(\lambda),b_n=O(\lambda n^{\gamma/2})$.
Beside,
\begin{equation*}
\begin{split}
  D_n^\gamma(\bu)
  &=l(\bTheta^\star + n^{-1/2}\bu)-l(\bTheta^\star)\\
  &-n\sum_{g=1}^G \lambda_g
  \bigg\{
  \|\bT_g (\bTheta^\star + n^{-1/2} \bu)\|_2  -\sum_{g=1}^G \|\bT_g  \bTheta^\star\|_2
  \bigg\}\\
  &\le \frac{1}{\sqrt n}l^\prime(\bTheta^\star)\trans \bu
            -\frac{1}{2}\bu\trans \bI(\bTheta^\star)\bu
            +3g_0 a \sqrt n \|\bu\|_2
            +o_p(\|\bu\|^2).
\end{split}
\end{equation*}
As long as $a\sqrt n \le O(1)$, i.e., $\lambda\sqrt n \le O(1)$, $\widehat \bTheta$ achieves $\sqrt n$-consistency.

The KKT condition now is
\begin{equation*}
  \begin{cases}
    &-n\bI(\bTheta^\star) (\widehat \bTheta-\bTheta^\star)+l^\prime(\bTheta^\star)+R_n^\prime(\widehat\bTheta-\bTheta)
      +n\partial P_\lambda(\widehat{\bTheta})
      +\bD\trans \widehat\bu  \ni \0,\\
    &\bD\widehat{\bTheta} = \bd,
  \end{cases}
\end{equation*}
where $\partial P_\lambda(\cdot)$ is the set of sub-gradients of the full penalty term $P_\lambda$, induced by the componentwise penalty $\mathcal{P}_{\gamma}(\cdot)$, and $R_n$ is the remainder term in Taylor expansion.

Denote $\bM=\bI(\bTheta^\star)^{-1}-\bI(\bTheta^\star)^{-1}\bD (\bD \bI(\bTheta^\star)\bD\trans)^{+}\bD\trans \bI(\bTheta^\star)^{-1}$, then the solution can be written as
\begin{equation*}
  \widehat\bTheta = \bTheta^\star - \frac{1}{n}\bM l^\prime(\bTheta^\star)-\frac{1}{n}\bM R_n^\prime(\widehat\bTheta-\bTheta^\star)-\bM \widetilde {\partial P}_\lambda(\bTheta),
\end{equation*}
where $\widetilde {\partial P}_\lambda(\cdot)$ is one particular element in all sub-gradients.

It follows that
\begin{equation}
\label{eq:sign_group}
\begin{aligned}
\sqrt n\bT_g\widehat\bTheta
    &=\sqrt n\bT_g\widehat\bTheta^\star
    - \bT_g \bM \frac{l^\prime(\bTheta^\star)}{\sqrt n}-\frac{1}{\sqrt n} \bT_g\bM R_n^\prime(\widehat\bTheta-\bTheta^\star)\\
    &-\sum_{g=1}^G  \lambda_g n^{1/2} \bT_g\bM \tilde \partial \|\bT_g \widehat\bTheta\|_2.
\end{aligned}
\end{equation}
From previous section, $\sqrt n (\bT_g\widehat\bTheta-\bT_g\bTheta^{(0)})=O_p(1)$, $\frac{l^\prime(\bTheta^\star)}{\sqrt n}=O_p(1)$, and $\frac{R_n^\prime(\widehat\bTheta-\bTheta^\star)}{\sqrt n}=o_p(1)$, and
\begin{equation*}
\lambda_g n^{1/2}=
    \begin{cases}
    O(\lambda n^{1/2}), &\quad \text{if } g\le g_0\\
    O(\lambda n^{(\gamma+1)/2}), &\quad \text{if } g> g_0
    \end{cases}
\end{equation*}
Also the sub-gradients are given as
\begin{equation*}
    \tilde \partial \|\bT_g \widehat\bTheta^\star\|_2=
    \begin{cases}
    \frac{\bT_g\trans \bT_g \widehat\bTheta}{\|\bT_g \widehat\bTheta\|_2}, &\quad \text{if } \|\bT_g \widehat\bTheta\|_2\ne0\\
    \bv, \; \text{ any }\bv:\|\bv\|_2\le 1,  &\quad \text{ if } \|\bT_g \widehat\bTheta\|_2=0.
    \end{cases}
\end{equation*}
Given that $\lambda n^{1/2} \to 0$ and $\lambda n^{(\gamma+1)/2}\to \infty$, if there are terms $\bT_g \widehat \bTheta \ne 0$ but $\bT_g \bTheta^\star=\0$, then the equation~\eqref{eq:sign_group} will be dominated by $\lambda_g n^{1/2} \bT_g\bM \tilde \partial \|\bT_g \widehat\bTheta\|_2$, which is of the order $O(\lambda n^{(\gamma+1)/2})$. Then the equation~\eqref{eq:sign_group} will not be true for sufficiently large $n$, since the other terms are of the order $O_p(1)$. Therefore, we could conclude that with probability tending to $1$, there is $\bT_g \widehat\bTheta=\0$ for any $\bT_g \bTheta^\star=\0$.
\end{proof}

\FloatBarrier

\clearpage

\section{Simulation Results}
\label{s:supp_sim}

\subsection{Additional Results for Baseline Settings}

The main simulation results in the paper focused on a baseline setting with $K=2$ clusters, $q=20$ covariates, and equal library sizes across samples. The complementary baseline results for heterogeneous-covariate selection and parameter estimation are reported in Supplement Tables~\ref{sim:tab:selection_bic_heterogeneous} and \ref{sim:tab:selection_bic_mse}. \revnew{Table~\ref{sim:tab:selection_bic_heterogeneous} summarizes how accurately the method identifies covariates whose effects differ across clusters, using sensitivity, specificity, and $F_1$ as classification summaries, whereas Table~\ref{sim:tab:selection_bic_mse} summarizes parameter-estimation accuracy under the baseline setting.
Smaller estimation errors indicate more accurate estimation, and DMMR-AP generally attains the smallest errors for $\bB$ and $\bDelta$.} \revnew{The reported errors are relative root-mean-squared errors: for a generic parameter $\boldeta$ with entries $\{\eta_j\}$ and estimate $\widehat\boldeta$, the rMSE is $\big(\sum_{j}(\widehat\eta_j-\eta_j)^2 \big/ \sum_{j}\eta_j^2\big)^{1/2}$, where the summation runs over all entries of the parameter (all components of $\bpi$ and $\btheta$, and all entries of $\bB$ and $\bDelta$).}

\begin{table}[htbp]
  \caption{Simulation: Heterogeneous covariates selection performance when $K=2$, $q=20$, and $M_i$ is fixed across samples.}
  \label{sim:tab:selection_bic_heterogeneous}
  \centering
  \begin{tabular}{llccc}
    \toprule
    $s$ & Model & Sensitivity & Specificity & $F_1$\\
    \midrule
    \multicolumn{5}{c}{$\theta = 0.05$} \\
    \multirow{2}{*}{0.2} & DMMR-P & 0.11 (0.30) & 1.00 (0.00) & 0.12 (0.31)\\
      & DMMR-AP & 1.00 (0.00) & 1.00 (0.00) & 1.00 (0.00)\\
    \multirow{2}{*}{0.4} & DMMR-P  & 1.00 (0.00) & 1.00 (0.00) & 1.00 (0.01)\\
      & DMMR-AP & 1.00 (0.00) & 1.00 (0.00) & 1.00 (0.00)\\
    \multirow{2}{*}{0.6} & DMMR-P & 0.96 (0.20) & 0.98 (0.08) & 0.94 (0.20)\\
      & DMMR-AP & 0.97 (0.15) & 1.00 (0.00) & 0.98 (0.13)\\
    \midrule
    \multicolumn{5}{c}{$\theta = 0.10$} \\
    \multirow{2}{*}{0.2} & DMMR-P & 0.00 (0.00) & 1.00 (0.00) & 0.00 (0.00)\\
      & DMMR-AP & 0.77 (0.41) & 1.00 (0.00) & 0.78 (0.41)\\
    \multirow{2}{*}{0.4} & DMMR-P  & 0.99 (0.07) & 1.00 (0.00) & 0.99 (0.07)\\
      & DMMR-AP & 0.99 (0.07) & 1.00 (0.00) & 0.99 (0.07)\\
    \multirow{2}{*}{0.6} & DMMR-P & 0.93 (0.26) & 0.99 (0.04) & 0.92 (0.26)\\
      & DMMR-AP & 0.97 (0.16) & 1.00 (0.02) & 0.97 (0.15)\\
    \bottomrule
  \end{tabular}
\end{table}

\begin{table}
  \caption{Simulation: Estimation performance when $K=2$, $q=20$, and $M_i$ is fixed across samples.}
  \vspace{-1em}
  \label{sim:tab:selection_bic_mse}
  \centering
  \begin{tabular}{llrrrr}
    \toprule
    $s$ & Model & $\mathrm{rMSE}(\bpi)$ & $\mathrm{rMSE}(\btheta)$ & $\mathrm{rMSE}(\bB)$ &$\mathrm{rMSE}(\bDelta)$ \\
    \midrule
    & \multicolumn{5}{c}{$\theta = 0.05$} \\
    \multirow{3}{*}{0.2} & DMMR-NP & 0.06 (0.04) & 0.25 (0.03) & 0.49 (0.03) & 0.49 (0.03)\\
     & DMMR-P & 0.06 (0.04) & 0.70 (0.20) & 0.48 (0.06) & 0.49 (0.05) \\
     & DMMR-AP & 0.06 (0.04) & 0.09 (0.04) & 0.23 (0.02) & 0.24 (0.02) \\
    \cmidrule(l{3pt}r{3pt}){1-6}
    \multirow{3}{*}{0.4} & DMMR-NP & 0.06 (0.04) & 0.29 (0.02) & 0.43 (0.03) & 0.43 (0.03)\\
     & DMMR-P & 0.06 (0.04) & 0.57 (0.21) & 0.36 (0.05) & 0.38 (0.06) \\
     & DMMR-AP & 0.06 (0.04) & 0.09 (0.04) & 0.18 (0.02) & 0.19 (0.02) \\
    \cmidrule(l{3pt}r{3pt}){1-6}
    \multirow{3}{*}{0.6} & DMMR-NP & 0.06 (0.05) & 0.41 (0.20) & 0.44 (0.10) & 0.45 (0.12)\\
     & DMMR-P & 0.06 (0.05) & 0.92 (0.87) & 0.36 (0.12) & 0.37 (0.13) \\
     & DMMR-AP & 0.06 (0.04) & 0.24 (0.72) & 0.19 (0.12) & 0.19 (0.13) \\
    \midrule
    & \multicolumn{5}{c}{$\theta = 0.10$} \\
    \multirow{3}{*}{0.2} & DMMR-NP & 0.06 (0.04) & 0.25 (0.03) & 0.65 (0.09) & 0.65 (0.09)\\
     & DMMR-P & 0.06 (0.04) & 0.57 (0.07) & 0.51 (0.03) & 0.52 (0.03) \\
     & DMMR-AP & 0.06 (0.04) & 0.20 (0.19) & 0.32 (0.09) & 0.33 (0.09) \\
    \cmidrule(l{3pt}r{3pt}){1-6}
    \multirow{3}{*}{0.4} & DMMR-NP & 0.06 (0.04) & 0.31 (0.03) & 0.60 (0.09) & 0.60 (0.10)\\
     & DMMR-P & 0.06 (0.05) & 0.57 (0.16) & 0.42 (0.06) & 0.44 (0.06) \\
     & DMMR-AP & 0.06 (0.04) & 0.12 (0.10) & 0.24 (0.04) & 0.24 (0.05) \\
    \cmidrule(l{3pt}r{3pt}){1-6}
    \multirow{3}{*}{0.6} & DMMR-NP & 0.06 (0.05) & 0.42 (0.08) & 0.62 (0.13) & 0.63 (0.15)\\
     & DMMR-P & 0.07 (0.07) & 0.96 (0.69) & 0.44 (0.14) & 0.45 (0.15) \\
     & DMMR-AP & 0.06 (0.05) & 0.26 (0.45) & 0.24 (0.11) & 0.24 (0.12) \\
    \bottomrule
  \end{tabular}
\end{table}

\clearpage
\subsection{Additional Simulation Results for Scalability and Robustness Evaluation}

To further evaluate the robustness of the proposed DMMR framework, we conducted additional simulations under varying read counts, higher-dimensional covariates, and larger numbers of clusters. The results were summarized in Tables~\ref{sim:tab:all_K_selection_add}, \ref{sim:tab:selection_bic_relevant_add}, \ref{sim:tab:selection_bic_heterogeneous_add}, and \ref{sim:tab:all_mse_add}, \revnew{reporting clustering accuracy, variable selection performance, and parameter estimation accuracy, respectively.}

As an additional simulation scenario, we investigated the impact of heterogeneous sequencing depths across samples. Although the main simulations fixed the total read count at $M=10,000$ to mirror the preprocessing of the STICS dataset \citep{zhou2019upper}, the proposed DMMR framework does not require equal library sizes. In our formulation, the sample-specific total count $M_i$ enters the Dirichlet---multinomial likelihood as a known quantity, allowing heterogeneous sequencing depths to be accommodated naturally. To assess robustness, we generated $M_i$ from $\{6000, 8000, 10000, 12000, 14000\}$ following \citet{weiss2017normalization}, while keeping all other components of the data-generating process unchanged. The results showed that clustering accuracy \revnew{(Table~\ref{sim:tab:all_K_selection_add})}, variable selection \revnew{(Tables~\ref{sim:tab:selection_bic_relevant_add} and \ref{sim:tab:selection_bic_heterogeneous_add})}, and parameter estimation \revnew{(Table~\ref{sim:tab:all_mse_add})} remained essentially unchanged relative to the fixed-$M$ setting, indicating that the proposed framework is robust to unequal library sizes.

We next examined performance under higher-dimensional covariates by considering scenarios with $q \in \{50,100\}$ while keeping the remaining simulation settings unchanged. The qualitative patterns remained consistent with the main simulations. DMMR-AP continued to achieve high sensitivity and $F_1$ scores in variable selection while maintaining high specificity. The same pattern was observed for coefficient estimation: DMMR-AP achieved smaller estimation errors for the regression coefficients $\bB$ and $\bDelta$ than both DMMR-P and the unpenalized estimator DMMR-NP. In contrast, the unpenalized estimator deteriorated substantially as the covariate dimension increased because the number of regression parameters grew rapidly. Notably, when $q$ was large, DMMR-NP often produced extremely small estimates of the dispersion parameter $\btheta$, leading to an rMSE close to $1$. This occurred because the large number of regression parameters allows the regression component to almost fully explain the cluster structure, effectively pushing the Dirichlet--multinomial model toward the multinomial case with negligible overdispersion. These results highlighted the stabilizing effect of adaptive penalization in high-dimensional settings.

Finally, we considered scenarios with a larger number of clusters ($K=5$). In this more complex setting, the advantage of penalization became more evident: both DMMR-P and DMMR-AP achieved substantially higher accuracy in selecting the correct number of clusters than the unpenalized estimator DMMR-NP\revnew{, as shown in Table~\ref{sim:tab:all_K_selection_add}}. Compared with the baseline setting ($K=2$), variable selection became more challenging because the number of cluster-specific regression parameters increased. As shown in Tables~\ref{sim:tab:selection_bic_relevant_add} and \ref{sim:tab:selection_bic_heterogeneous_add}, DMMR-P exhibited a noticeable decline in sensitivity and $F_1$ scores, whereas DMMR-AP maintained substantially better variable selection performance while preserving high specificity. Consistent with this pattern, DMMR-AP also achieved smaller estimation errors for $\bB$ and $\bDelta$, while DMMR-NP showed substantially larger errors due to the increased number of parameters when $K$ increased\revnew{, as shown in Table~\ref{sim:tab:all_mse_add}}. For the over-dispersion parameter $\btheta$, the penalized estimators generally exhibited larger estimation errors than DMMR-NP, likely reflecting a shrinkage---dispersion trade-off: shrinkage of regression coefficients may shift part of the unexplained variability toward the dispersion component, making accurate estimation of $\btheta$ more difficult.

\begin{table}[htbp]
  \centering
  \caption{Simulation: Accuracy of selecting $K$ (Acc($K$)) and the Kappa statistics (Kappa) for additional settings, fixing $\theta=0.05$.}
  \label{sim:tab:all_K_selection_add}
  \begin{tabular}{lrccc}
    \toprule
    & \multicolumn{2}{c}{$s = 0.4$} & \multicolumn{2}{c}{$s = 0.6$} \\
    \cmidrule(l{3pt}r{3pt}){2-3} \cmidrule(l{3pt}r{3pt}){4-5}
    & Acc($K$) & Kappa & Acc($K$) & Kappa\\
    \midrule
    & \multicolumn{4}{c}{$K=2,\, q = 50$} \\
    K-Means & 0.750 & 0.941 (0.058) & 0.075 & 0.889 (0.068)\\
    HC & 0.180 & 0.678 (0.250) & 0.020 & 0.108 (0.066)\\
    DMM & 0.885 & 0.884 (0.120) & 0.715 & 0.465 (0.239)\\
    DMMR-NP & 0.575 & 1.000 (0.002) & 0.920 & 1.000 (0.000)\\
    DMMR-P & 0.970 & 0.996 (0.049) & 0.810 & 0.986 (0.101)\\
    DMMR-AP & 0.920 & 1.000 (0.002) & 0.880 & 1.000 (0.000)\\
    \midrule
    & \multicolumn{4}{c}{$K=2,\, q = 100$} \\
    K-Means & 0.750 & 0.944 (0.045) & 0.070 & 0.880 (0.081)\\
    HC & 0.155 & 0.774 (0.189) & 0.015 & 0.111 (0.114)\\
    DMM & 0.875 & 0.882 (0.111) & 0.735 & 0.471 (0.250)\\
    DMMR-NP & 0.975 & 1.000 (0.001) & 0.925 & 0.946 (0.200)\\
    DMMR-P & 0.945 & 0.999 (0.004) & 0.760 & 0.966 (0.154)\\
    DMMR-AP & 0.695 & 1.000 (0.001) & 0.891 & 1.000 (0.000)\\
    \midrule
    & \multicolumn{4}{c}{$K=5,\, q = 20$} \\
    K-Means & 0.505 & 0.841 (0.059) & 0.215 & 0.396 (0.090)\\
    HC & 0.130 & 0.293 (0.092) & 0.075 & 0.195 (0.055)\\
    DMM & 0.005 & - & 0.000 & - \\
    DMMR-NP & 0.035 & 0.996 (0.007) & 0.080 & 0.966 (0.041)\\
    DMMR-P & 0.495 & 0.882 (0.075) & 0.260 & 0.479 (0.117)\\
    DMMR-AP & 0.635 & 0.967 (0.068) & 0.350 & 0.689 (0.266)\\
    \midrule
    & \multicolumn{4}{c}{$K=2,\, q = 20$, $M_i$ varying across samples} \\
    K-Means & 0.755 & 0.944 (0.049) & 0.070 & 0.904 (0.080)\\
    HC & 0.095 & 0.768 (0.184) & 0.000 & -\\
    DMM & 0.935 & 0.886 (0.116) & 0.760 & 0.485 (0.249)\\
    DMMR-NP & 1.000 & 1.000 (0.001) & 0.970 & 1.000 (0.001)\\
    DMMR-P & 0.980 & 1.000 (0.001) & 0.920 & 1.000 (0.002)\\
    DMMR-AP & 1.000 & 1.000 (0.001) & 0.975 & 0.995 (0.061)\\
    \bottomrule
  \end{tabular}
\end{table}

\begin{table}[htbp]
  \caption{Simulation: Relevant covariates selection performance for additional settings, fixing $\theta=0.05$.}
  \label{sim:tab:selection_bic_relevant_add}
  \centering
  \begin{tabular}{llccc}
    \toprule
    $s$ & Model & Sensitivity & Specificity & $F_1$\\
    \midrule
    \multicolumn{5}{c}{$K=2,\, q = 50$} \\
    0.4 & DMMR-P & 1.00 (0.04) & 0.96 (0.04) & 0.93 (0.06)\\
    0.4 & DMMR-AP & 1.00 (0.00) & 0.98 (0.02) & 0.97 (0.04)\\
    0.6 & DMMR-P & 0.93 (0.18) & 0.91 (0.12) & 0.83 (0.14)\\
    0.6 & DMMR-AP & 0.96 (0.13) & 0.98 (0.05) & 0.95 (0.10)\\
    \midrule
    \multicolumn{5}{c}{$K=2,\, q = 100$} \\
    0.4 & DMMR-P & 1.00 (0.00) & 0.98 (0.02) & 0.92 (0.07)\\
    0.4 & DMMR-AP & 1.00 (0.00) & 0.99 (0.01) & 0.96 (0.04)\\
    0.6 & DMMR-P & 0.88 (0.21) & 0.97 (0.08) & 0.82 (0.13)\\
    0.6 & DMMR-AP & 0.96 (0.14) & 0.95 (0.21) & 0.92 (0.19)\\
    \midrule
    \multicolumn{5}{c}{$K=5,\, q = 20$} \\
    0.4 & DMMR-P & 0.53 (0.06) & 1.00 (0.02) & 0.69 (0.05)\\
    0.4 & DMMR-AP & 0.78 (0.25) & 0.93 (0.10) & 0.82 (0.14)\\
    0.6 & DMMR-P & 0.52 (0.05) & 0.99 (0.03) & 0.68 (0.04)\\
    0.6 & DMMR-AP & 0.62 (0.21) & 0.96 (0.10) & 0.73 (0.12)\\
    \midrule
    \multicolumn{5}{c}{$K=2,\, q = 20$, $M_i$ varying across samples} \\
    0.4 & DMMR-P & 1.00 (0.00) & 0.77 (0.26) & 0.91 (0.09)\\
    0.4 & DMMR-AP & 1.00 (0.00) & 1.00 (0.02) & 1.00 (0.01)\\
    0.6 & DMMR-P & 0.99 (0.08) & 0.42 (0.39) & 0.79 (0.12)\\
    0.6 & DMMR-AP & 0.99 (0.06) & 0.99 (0.07) & 0.99 (0.05)\\
    \bottomrule
  \end{tabular}
\end{table}

\begin{table}[htbp]
  \caption{Simulation: Heterogeneous covariates selection performance for additional settings, fixing $\theta=0.05$.}
  \label{sim:tab:selection_bic_heterogeneous_add}
  \centering
  \begin{tabular}{llccc}
    \toprule
    $s$ & Model & Sensitivity & Specificity & $F_1$\\
    \midrule
    \multicolumn{5}{c}{$K=2,\, q = 50$} \\
    0.4 & DMMR-P & 0.99 (0.07) & 1.00 (0.00) & 0.99 (0.07)\\
    0.4 & DMMR-AP & 1.00 (0.00) & 1.00 (0.00) & 1.00 (0.00)\\
    0.6 & DMMR-P & 0.85 (0.36) & 1.00 (0.00) & 0.85 (0.36)\\
    0.6 & DMMR-AP & 0.92 (0.27) & 1.00 (0.00) & 0.92 (0.26)\\
    \midrule
    \multicolumn{5}{c}{$K=2,\, q = 100$} \\
    0.4 & DMMR-P & 1.00 (0.00) & 1.00 (0.00) & 1.00 (0.00)\\
    0.4 & DMMR-AP & 1.00 (0.00) & 1.00 (0.00) & 1.00 (0.00)\\
    0.6 & DMMR-P & 0.77 (0.43) & 0.99 (0.07) & 0.76 (0.43)\\
    0.6 & DMMR-AP & 0.91 (0.28) & 0.96 (0.20) & 0.87 (0.32)\\
    \midrule
    \multicolumn{5}{c}{$K=5,\, q = 20$} \\
    0.4 & DMMR-P & 0.00 (0.00) & 1.00 (0.00) & 0.00 (0.00)\\
    0.4 & DMMR-AP & 0.56 (0.50) & 1.00 (0.01) & 0.55 (0.49)\\
    0.6 & DMMR-P & 0.00 (0.00) & 1.00 (0.00) & 0.00 (0.00)\\
    0.6 & DMMR-AP & 0.23 (0.42) & 1.00 (0.01) & 0.23 (0.42)\\
    \midrule
    \multicolumn{5}{c}{$K=2,\, q = 20$, $M_i$ varying across samples} \\
    0.4 & DMMR-P & 1.00 (0.00) & 1.00 (0.01) & 1.00 (0.01)\\
    0.4 & DMMR-AP & 1.00 (0.00) & 1.00 (0.00) & 1.00 (0.00)\\
    0.6 & DMMR-P & 0.97 (0.17) & 0.97 (0.12) & 0.95 (0.18)\\
    0.6 & DMMR-AP & 0.98 (0.12) & 1.00 (0.00) & 0.98 (0.11)\\
    \bottomrule
  \end{tabular}
\end{table}

\begin{table}[htbp]
  \caption{Simulation: Estimation performance for additional settings, fixing $\theta=0.05$.}
  \label{sim:tab:all_mse_add}
  \centering
  \begin{tabular}{llrrrr}
    \toprule
    $s$ & Model & $\mathrm{rMSE}(\bpi)$ & $\mathrm{rMSE}(\btheta)$ & $\mathrm{rMSE}(\bB)$ &$\mathrm{rMSE}(\bDelta)$ \\
    \midrule
    & \multicolumn{5}{c}{$K=2,\, q = 50$} \\
    \multirow{3}{*}{0.4} & DMMR-NP & 0.06 (0.04) & 0.73 (0.03) & 1.88 (0.30) & 1.88 (0.30)\\
     & DMMR-P & 0.06 (0.04) & 0.85 (0.15) & 0.43 (0.03) & 0.45 (0.04) \\
     & DMMR-AP & 0.06 (0.04) & 0.28 (0.09) & 0.26 (0.03) & 0.27 (0.03) \\
     \cmidrule(l{3pt}r{3pt}){1-6}
    \multirow{3}{*}{0.6} & DMMR-NP & 0.06 (0.08) & 0.82 (0.14) & 2.19 (0.25) & 2.19 (0.25)\\
     & DMMR-P & 0.09 (0.13) & 2.09 (1.22) & 0.52 (0.14) & 0.55 (0.16) \\
     & DMMR-AP & 0.07 (0.09) & 0.75 (1.10) & 0.30 (0.16) & 0.31 (0.17) \\
    \midrule
    & \multicolumn{5}{c}{$K=2,\, q = 100$} \\
    \multirow{3}{*}{0.4} & DMMR-NP & 0.06 (0.04) & 1.00 (0.00) & 4.74 (0.75) & 4.64 (0.74)\\
     & DMMR-P & 0.06 (0.04) & 0.97 (0.15) & 0.45 (0.03) & 0.48 (0.03) \\
     & DMMR-AP & 0.06 (0.04) & 0.41 (0.10) & 0.30 (0.03) & 0.32 (0.03) \\
     \cmidrule(l{3pt}r{3pt}){1-6}
    \multirow{3}{*}{0.6} & DMMR-NP & 0.07 (0.08) & 1.00 (0.02) & 2.59 (0.86) & 2.49 (0.84)\\
     & DMMR-P & 0.10 (0.12) & 2.56 (1.32) & 0.58 (0.16) & 0.61 (0.17) \\
     & DMMR-AP & 0.07 (0.09) & 0.73 (1.13) & 0.36 (0.36) & 0.37 (0.35) \\
    \midrule
    & \multicolumn{5}{c}{$K=5,\, q = 20$} \\
    \multirow{3}{*}{0.4} & DMMR-NP & 0.19 (0.14) & 0.67 (0.15) & 1.82 (0.21) & 1.96 (0.22)\\
     & DMMR-P & 0.23 (0.14) & 2.13 (0.19) & 0.69 (0.09) & 0.74 (0.10) \\
     & DMMR-AP & 0.20 (0.14) & 1.05 (0.88) & 0.51 (0.20) & 0.55 (0.21) \\
     \cmidrule(l{3pt}r{3pt}){1-6}
    \multirow{3}{*}{0.6} & DMMR-NP & 0.41 (0.15) & 0.46 (0.24) & 1.62 (0.22) & 1.80 (0.25)\\
     & DMMR-P & 0.53 (0.17) & 4.57 (0.35) & 0.90 (0.08) & 0.99 (0.09) \\
     & DMMR-AP & 0.47 (0.21) & 3.35 (1.42) & 0.79 (0.20) & 0.88 (0.23) \\
    \midrule
    & \multicolumn{5}{c}{$K=2,\, q = 20$, $M_i$ varying across samples} \\
    \multirow{3}{*}{0.4} & DMMR-NP & 0.06 (0.04) & 0.29 (0.03) & 0.44 (0.04) & 0.44 (0.04)\\
     & DMMR-P & 0.06 (0.04) & 0.59 (0.20) & 0.36 (0.05) & 0.38 (0.06) \\
     & DMMR-AP & 0.06 (0.04) & 0.09 (0.05) & 0.19 (0.02) & 0.19 (0.02) \\
     \cmidrule(l{3pt}r{3pt}){1-6}
    \multirow{3}{*}{0.6} & DMMR-NP & 0.06 (0.08) & 0.42 (0.35) & 0.44 (0.10) & 0.45 (0.11)\\
     & DMMR-P & 0.07 (0.08) & 0.82 (0.85) & 0.34 (0.12) & 0.35 (0.13) \\
     & DMMR-AP & 0.06 (0.08) & 0.20 (0.63) & 0.18 (0.11) & 0.18 (0.12) \\
    \bottomrule
  \end{tabular}
\end{table}

Table~\ref{sim:time_cost} summarizes the computational time for representative settings.
We set the sample size to $n=200$ and the number of taxa to $p=20$, and vary the number of covariates $q$ and clusters $K$ to evaluate computational scalability under increasing model complexity. For each configuration, the reported time corresponds to fitting the model at candidate $\hat K$ values. The computational cost increases with $q$ and
$\hat K$.

 \begin{table}[htbp]
      \caption{Simulation: Computational time for representative settings ($\theta=0.05, s=0.4$) in minutes.}
      \label{sim:time_cost}
      \centering
      \setlength{\tabcolsep}{8pt}
      \begin{tabular}{lcrr rr}
        \toprule
        & & \multicolumn{2}{c}{DMMR-P} & \multicolumn{2}{c}{DMMR-AP} \\
        \cmidrule(l{3pt}r{3pt}){3-4} \cmidrule(l{3pt}r{3pt}){5-6}
$(K,q)$ & $\hat K$ & ADMM & Overall & ADMM & Overall\\
        \midrule
        \multirow{3}{*}{(2, 20)}
 & 1 & 0.2 (0.0) & 0.4 (0.1) & 0.3 (0.1) & 0.6 (0.1)\\
 & 2 & 1.8 (0.4) & 3.2 (0.5) & 1.8 (0.4) & 3.0 (0.5)\\
 & 3 & 4.7 (1.3) & 7.5 (2.5) & 3.6 (1.1) & 5.7 (1.4)\\
        \midrule
        \multirow{3}{*}{(2, 50)}
 & 1 & 0.3 (0.1) & 0.6 (0.1) & 0.6 (0.1) & 0.9 (0.1)\\
 & 2 & 5.3 (2.4) & 7.4 (2.8) & 4.4 (2.0) & 6.3 (2.4)\\
 & 3 & 8.3 (3.5) & 11.6 (4.3) & 5.6 (2.6) & 8.2 (3.1)\\
        \midrule
        \multirow{3}{*}{(2, 100)}
 & 1 & 0.8 (0.1) & 1.2 (0.2) & 1.7 (0.3) & 2.1 (0.3)\\
 & 2 & 17.9 (9.4) & 21.7 (10.2) & 16.3 (8.1) & 19.3 (8.9)\\
 & 3 & 21.3 (14.0) & 25.7 (15.2) & 10.7 (7.9) & 13.8 (8.6)\\
        \midrule
        \multirow{3}{*}{(5, 20)}
 & 3 & 3.6 (0.6) & 6.3 (0.9) & 3.2 (0.6) & 5.5 (0.8)\\
 & 4 & 5.8 (1.0) & 9.7 (1.4) & 5.0 (1.0) & 8.4 (1.3)\\
 & 5 & 9.7 (2.6) & 15.3 (3.4) & 7.5 (1.9) & 12.1 (2.6)\\
 & 6 & 16.4 (3.8) & 24.6 (5.4) & 10.5 (3.1) & 16.3 (4.1)\\
\bottomrule
\end{tabular}
\end{table}

\FloatBarrier

\clearpage
\section{Application}
\label{s:supp_application}

\subsection{Covariate Description}\label{supp:app:covariate}

\begin{table}[htp]
\caption{Selected demographic and clinical features from the STICS dataset. The table includes 14 original variables, expanded into 18 dummy variables. The first four variables are continuous, while the remaining are binary or categorical.}
\label{tab:app:covariates}
\centering
\scriptsize
\begin{tabular}{@{}lll@{}}
\toprule
Variable Name & Description & Coding \\
\midrule
age\_enr & Age at enrollment & Continuous \\
bmi & Body Mass Index (BMI) & Continuous \\
num\_oral\_steroid\_courses & Number of oral steroid courses for asthma in the past year & Continuous \\
ige & Immunoglobulin E (IgE) level & Continuous \\
gender & Gender & 1 = Male, 2 = Female \\
race & Race & White, Black, Other \\
ethnicity & Ethnicity & 1 = Non-Hispanic, 2 = Hispanic \\
parent\_ast & Parental history of asthma & 0 = No, 1 = Yes, 8 = Don't know \\
smoke\_exp & Tobacco smoke exposure & 0 = No, 1 = Yes, 8 = Don't know \\
pets & Pet exposure & 0 = No, 1 = Yes \\
eczema & Participant history of eczema & 0 = No, 1 = Yes, 8 = Don't know \\
steroid & Nasal steroid use prior to enrollment & 0 = No, 1 = Yes \\
antibiotics & Antibiotic use prior to enrollment & 0 = No, 1 = Yes \\
virus & Viral analysis result at baseline & 0 = Negative, 1 = Positive \\
\bottomrule
\end{tabular}
\end{table}

\subsection{Additional Estimation Results}\label{supp:app:para}

Table~\ref{app:tab:pi_theta} summarizes the estimated mixing proportions ($\bpi$) and over-dispersion parameters ($\btheta$) for HC+DM and DMMR-AP.

\begin{table}[htp]
\centering
\caption{Application: Estimates of parameters $\bpi$ and $\btheta$.}
\vspace{-1em}
\label{app:tab:pi_theta}
\begin{tabular}{lrrrr}
\toprule
\multicolumn{1}{c}{ } & \multicolumn{2}{c}{HC+DM} & \multicolumn{2}{c}{DMMR-AP} \\
\cmidrule(l{3pt}r{3pt}){2-3} \cmidrule(l{3pt}r{3pt}){4-5}
Cluster & $\bpi$ & $\btheta$ & $\bpi$ & $\btheta$\\
\midrule
 Strep-dominant & 0.2897 & 0.0840 & 0.5159 & 0.1053\\
 Dolo/Coryne-dominant & 0.5374 & 0.1654 & 0.3346 & 0.1053\\
 Mixed-pathobiont & 0.1729 & 0.1896 & 0.1495 & 0.1053\\
\bottomrule
\vspace{1em}
\end{tabular}
\end{table}

\subsection{Sensitivity to the Taxa Abundance Threshold}\label{supp:app:abund_threshold}

We assessed sensitivity to the abundance threshold by repeating the analysis with a lower cutoff of $0.1\%$, which increased the number of retained taxa from $p = 24$ to $p = 41$. We then refit DMMR under the expanded taxa set using the same model-selection procedure as in the main analysis.

The clustering solution was stable across the two settings of the abundance threshold: the Adjusted Rand Index (ARI) of DMMR clustering patterns was $0.82$. For comparison, the corresponding ARI for HC was $0.74$, indicating slightly weaker agreement under the threshold change.

The covariate findings were also largely preserved. Under $0.5\%$ threshold and $p = 24$, ``eczema1'' and ``steroid1'' had heterogeneous effects across the three clusters. Under $0.1\%$ threshold and $p = 41$, ``eczema1'' remained selected, while ``viruspositive'' replaced ``steroid1''. Thus, the primary signal persisted and only the secondary heterogeneous covariate changed.

Overall, these results indicated that the proposed DMMR was reasonably stable with respect to the abundance threshold, with clustering structure and primary covariate effects largely preserved. We would like to point out that the choice of threshold involves a trade-off between including more taxa (which may capture additional biological variations) and maintaining model stability/data quality (which may enhance interpretability and reduce noise). We thus recommend that researchers consider the specific context and quality of their data when selecting an appropriate cutoff.

\begin{figure}[htbp]
    \centering
    \includegraphics[width=\linewidth]{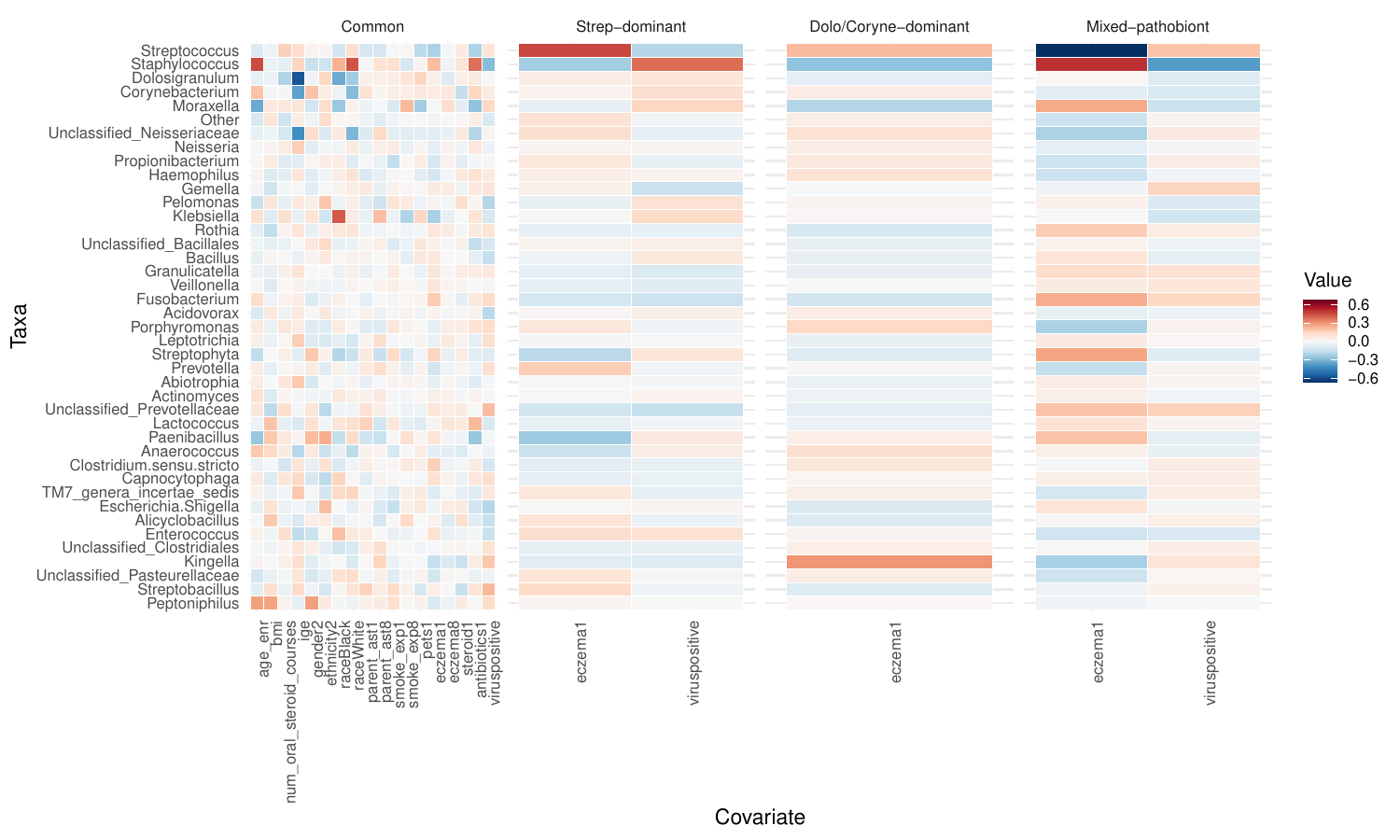}
    \caption{Application: Estimated covariate effects for the three clusters identified by DMMR with the abundance threshold set to 0.1\%. The common effects of all covariates and the cluster-specific effects of two selected heterogeneous covariates are shown.}

    \label{app:fig:coef_41taxa}
  \end{figure}

\subsection{Survival Analysis of Time to YZ Episode}\label{supp:app:survival}

We then compared the time-to-event distributions of YZ across the three clusters identified by both models. To facilitate the comparison, we closely followed the setup in \citet{zhou2019upper}.
Participants were followed for up to 320 days. The event of interest was defined as developing at least 2 YZ episodes during the follow-up. Among the 214 participants, 93 experienced two or more YZ episodes, while the remaining 121 either had none or only one. \rev{The Kaplan---Meier curves corresponding to different clustering approaches were shown in Figure~\ref{app:fig:survival}. Under HC, some visual separation emerged after approximately 200 days, with the \textit{Dolo/Coryne-dominant} cluster exhibiting higher survival probabilities than the other two clusters. However, the overall log-rank test was not statistically significant at the 0.05 level ($p = 0.11$). In contrast, the survival curves based on the DMMR clusters showed noticeably different separation patterns and approached statistical significance at the 0.05 level ($p = 0.053$). In particular, the DMMR curves exhibited consistent separation of survival probabilities between approximately 100 and 300 days, with the \textit{Dolo/Coryne-dominant} cluster maintaining the highest survival probabilities and the \textit{Strep-dominant} cluster consistently the lowest; after 300 days, the \textit{Strep-dominant} and \textit{Mixed-pathobiont} clusters converged, while the \textit{Dolo/Coryne-dominant} cluster remained higher.}

\begin{figure}[htp]
\centering
    \begin{subfigure}[t]{0.8\linewidth}
    \centering
    \includegraphics[width=\linewidth]{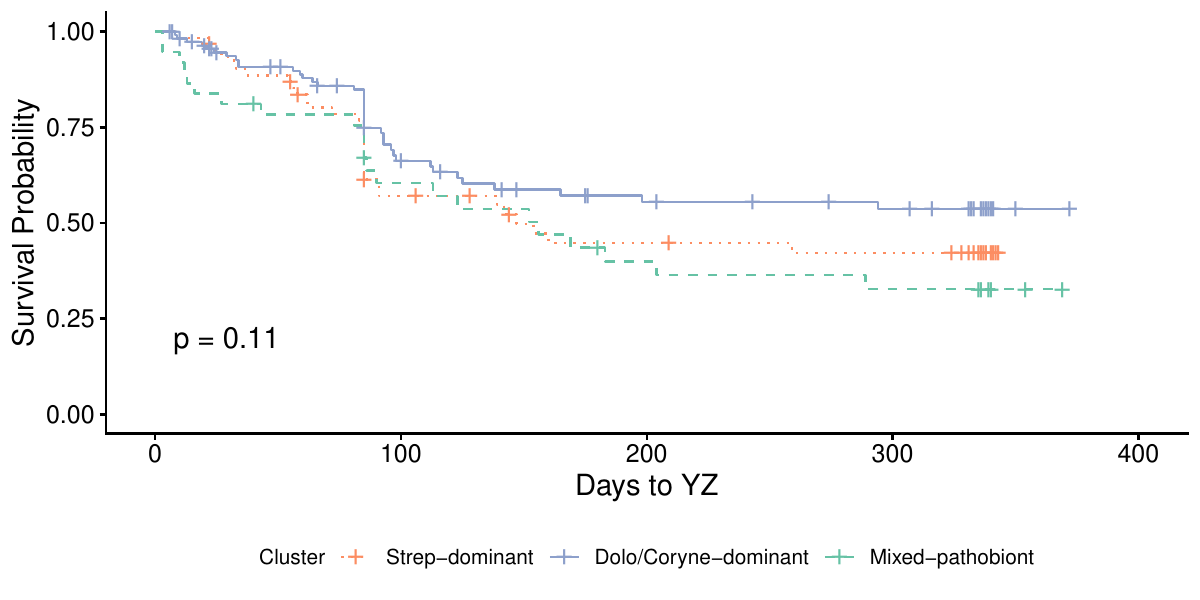}
    \caption{HC}
    \end{subfigure}\hspace{0.01\textwidth}
    \begin{subfigure}[t]{0.8\linewidth}
    \centering
    \includegraphics[width=\linewidth]{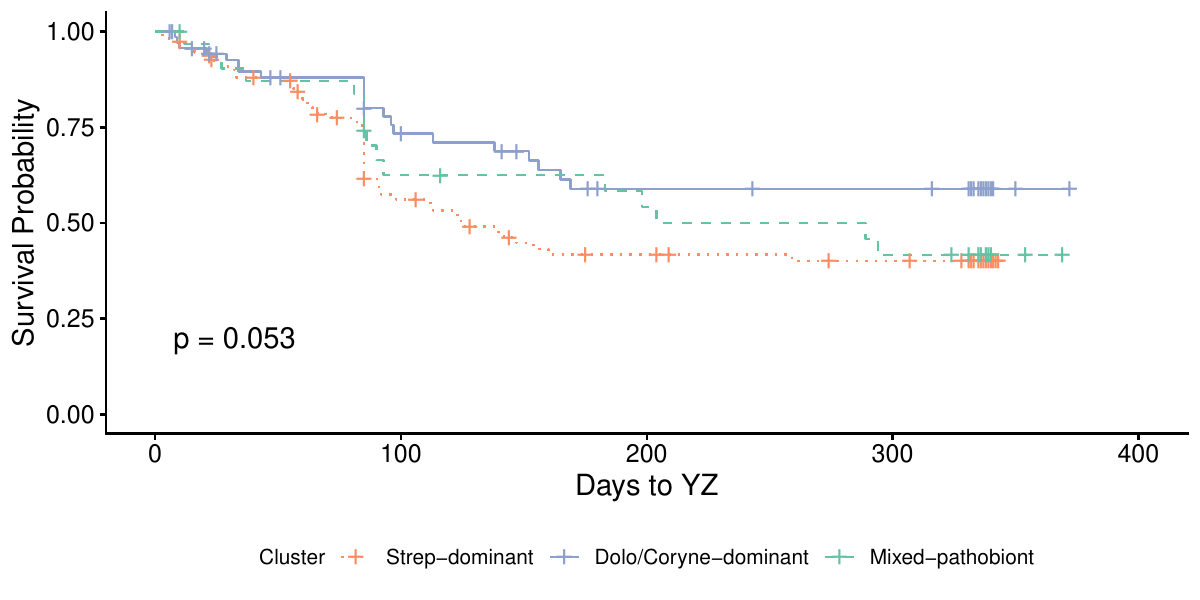}
    \caption{DMMR-AP}
    \end{subfigure}\hspace{0.01\textwidth}
    \caption{Application: Kaplan-Meier curves of developing YZ for the clusters identified by either HC or DMMR.}
    \label{app:fig:survival}
\end{figure}

\FloatBarrier

\end{document}